\newtheorem{Remark}{Remark}
\newcommand{\add}[1]{{\color{blue}{#1}}}
\newtheorem{Definition}{Definition}
\newtheorem{Lemma}{Lemma}
\DeclareMathOperator*{\argmax}{arg\,max}
\DeclareMathOperator*{\argmin}{arg\,min}
\newcommand*{\vehSet}{\mathcal{N}}
\newcommand*{\waypSet}{\mathcal{W}}
\newcommand*{\vehNum}{N}
\newcommand*{\bsSet}{\mathcal{M}}
\newcommand*{\bsNum}{M}
\newcommand*{\txTime}{\tau}
\newcommand*{\periodLen}{{\mathsf{T}_p}}
\newcommand*{\slotLen}{{\mathsf{T}_s}}
\newcommand*{\state}{\mathsf{S}}
\newcommand*{\action}{\mathsf{A}}
\newcommand*{\bandwidth}{\mathrm{B}}
\newcommand*{\spIdx}{\iota}
\newcommand*{\spLen}{\Gamma}
\newcommand*{\spNum}{T^s}
\newcommand*{\nthVehicleFinishTime}{ t_f }
\newcounter{pCounter}
\newcommand{\plabel}[1]{%
    \refstepcounter{pCounter}%
    \zlabel{#1}%
    \textbf{P\thepCounter}%
}
\newcommand{\pref}[1]{\textbf{P\zref{#1}}}
\newcounter{cCounter}
\newcommand{\clabel}[1]{%
    \refstepcounter{cCounter}%
    \zlabel{#1}%
    \textit{C\thecCounter}%
}
\newcommand{\cref}[1]{\textit{C\zref{#1}}}
\let\orig@tagform@\tagform@
\newcommand{\add@tagform}[1]{\textcolor{black}{\orig@tagform@{#1}}}
\renewcommand{\tagform@}[1]{%
  \iftoggle{blueeqn}{\add@tagform{#1}}{\orig@tagform@{#1}}%
}
\newenvironment{blueeqns}
{\toggletrue{blueeqn}%
  \ignorespaces%
}
{\togglefalse{blueeqn}%
  \ignorespacesafterend%
}
\def\BibTeX{{\rm B\kern-.05em{\sc i\kern-.025em b}\kern-.08em
    T\kern-.1667em\lower.7ex\hbox{E}\kern-.125emX}}
\begin{document}
\title{A Dynamic Programming Framework for Vehicular Task Offloading with Successive Action Improvement
}

\author{Qianren Li, Yuncong Hong, Bojie Lv, and Rui Wang\\
The Department of Electronic and Electrical Engineering \\
The Southern University of Science and Technology
\thanks{
A preliminary version of this work has been accepted by IEEE Wireless Communications and Networking Conference (IEEE WCNC) 2025 \cite{li2025dynamic}. In this journal version, a new two-time-scale framework is introduced. It is shown by simulation that the new framework achieves a better performance than the one in the conference version.

	Qianren Li, Bojie Lv, Yuncong Hong, and Rui Wang are with the Department of Electronic and Electrical Engineering (EEE), Southern University of Science and Technology (SUSTech), China. 	Email: liqr2022@mail.sustech.edu.cn, hongyc@mail.sustech.edu.cn, lyubj@mail.sustech.edu.cn, wang.r@sustech.edu.cn.
}}
\maketitle
\begin{abstract}
    In this paper, task offloading from vehicles with random velocities is optimized via a novel dynamic programming framework. Particularly, in a vehicular network with multiple vehicles and base stations (BSs), computing tasks of vehicles are offloaded via BSs to an edge server. Due to the random velocities, the exact locations of vehicles versus time, namely trajectories, cannot be determined in advance. Hence, instead of deterministic optimization, the cell association, uplink time, and throughput allocation of multiple vehicles during a period of task offloading are formulated as a finite-horizon Markov decision process. In order to derive a low-complexity solution algorithm, a two-time-scale framework is proposed. The scheduling period is divided into super slots, each super slot is further divided into a number of time slots. At the beginning of each super slot, we first obtain a reference scheduling scheme of cell association, uplink time and throughput allocation via deterministic optimization, yielding an approximation of the optimal value function. Within the super slot, the actual scheduling action of each time slot is determined by making improvement to the approximate value function according to the system state. Due to the successive improvement framework, a non-trivial average cost upper bound could be derived. In the simulation, the random trajectories of vehicles are generated from a high-fidelity traffic simulator. It is shown that the performance gain of the proposed scheduling framework over the baselines is significant.
\end{abstract}

\section{Introduction}
Edge or cloud computing for vehicular networks has become necessary in many emerging applications \cite{wangArchitecturalDesignAlternatives2020}. For example, the online ride-hailing services, such as Uber and Didi Chuxing, have gained widespread popularity. 
In order to address the safety risks \cite{UbersUSSafety}, the data of safety monitoring in the vehicle should be offloaded to remote servers. However, limited wireless radio resource for task offloading may hinder real-time response of edge or cloud computing. Moreover, the high mobility of vehicles may bring uncertainty in the uplink transmission scheduling, and hence, degrade the scheduling efficiency.

In fact, significant research efforts have been devoted to efficient scheduling of task offloading in vehicular networks \cite{jiaLearningBasedQueuingDelayAware2021,zhangJointOffloadingDecision2021, choEnergyEfficientCooperativeOffloading2022, bozorgchenaniComputationOffloadingHeterogeneous2022}.
For instance, the work in \cite{jiaLearningBasedQueuingDelayAware2021} modeled vehicle positions versus time via trajectory tables \cite{wangMaximizingSpatialTemporal2018}, which were  leveraged by the scheduler to optimize the offloading throughput. 
In contrast, the studies in \cite{zhangJointOffloadingDecision2021, choEnergyEfficientCooperativeOffloading2022, bozorgchenaniComputationOffloadingHeterogeneous2022} assumed constant vehicle velocities in the optimization of edge computing performance. Additionally, the work in \cite{shiPriorityAwareTaskOffloading2020} adopted a free-flow traffic model \cite{fricker2004fundamentals}, where vehicle velocities were deterministically related to traffic density.
In all these approaches, vehicle trajectories were assumed to be deterministic and fully known to the scheduler. 
Consequently, task offloading strategies were formulated as deterministic optimization problems. 
However, in real-world scenarios, even when vehicles follow predetermined routes, their velocities can exhibit randomness due to dynamic traffic conditions.

Incorporating the randomness of vehicle trajectories, task offloading scheduling should depend on the dynamic system state, yielding  a stochastic optimization problem. Hence, the methods of Lyapunov optimization \cite{yangDynamicTrajectoryOffloading2021} and Markov decision process (MDP) \cite{linPopularityAwareOnlineTask2022,zhangMDPBasedTaskOffloading2020, liDelayTolerantDataTraffic2018, malekiHandoverEnabledDynamicComputation2023, liuDeepReinforcementLearning2019} could be applied. For example in \cite{yangDynamicTrajectoryOffloading2021}, the average energy consumption was minimized subject to the queue stability constraint via the Lyapunov optimization, where the speeds of vehicles were modeled by the Gauss-Markov process. However, this method could not directly optimize or limit the performance of offloading delay. On the other hand, although the MDP formulation could address the above issues, the computation complexity of optimal solution is usually prohibitive \cite{zhangMDPBasedTaskOffloading2020, liDelayTolerantDataTraffic2018}. Hence, a number of approximate MDP methods were investigated. For example, the multi-armed bandits (MAB) method was proposed in \cite{linPopularityAwareOnlineTask2022}  to minimize the average offloading energy while ensuring the constraint on the task completion time, where the Manhattan model \cite{harriMobilityModelsVehicular2009} was adopted to model the random trajectories of vehicles. The MAB approach relies on the statistics learning to track the average future performance with vehicles' random trajectories. The efficiency might be low in the context that the trajectory statistics have already known to the scheduler. A model-assisted Q-learning algorithm was proposed in \cite{liuDeepReinforcementLearning2019} to optimize a joint communication and computation utility of vehicular task offloading, where the scheduler selected offloading server from a dynamic candidates set and allocated uplink spectrum for task offloading. A Double Deep Q-Network was proposed in \cite{malekiHandoverEnabledDynamicComputation2023} to minimize both average task completion time and average energy consumption, where the vehicles' mobility follows the Manhattan model and the dynamic server selection was considered. However, in these works, the complexity of training neural networks is still high, especially when the neural networks should be generally applied on a large number of traffic scenarios. Moreover, their performance could hardly be predicted from the perspective of algorithm robustness.

In this paper, we would like to shed some light on the low-complexity stochastic optimization for task offloading in a vehicular network with random trajectories. Particularly, we consider the scheduling of task offloading from multiple vehicles to an edge server via multiple base stations (BSs). Due to the randomness in the vehicles' velocities, we formulate the BS association, uplink time and throughput allocation for task offloading as a finite-horizon MDP. The contributions of this paper are summarized below.
\begin{itemize}
    \item A novel low-complexity solution, consisting of large-time-scale (super-slot-scale) deterministic pre-allocation and small-time-scale (slot-scale) online adjustment, is proposed. The pre-allocation is conducted periodically according to deterministic representative trajectories every super slot, which consists of a number of time slots. By approximating the optimal value function of the Bellman's equations \cite{bertsekas1996dynamic} analytically via the pre-allocation, the conventional value iteration can be eliminated, and the small-time-scale online scheduling can be optimized efficiently, instead of the exhaustive search.
    \item Due to the successive action improvement structure of the proposed solution, a non-trivial upper bound on the average scheduling cost could be derived.
    \item A high-fidelity traffic simulator is adopted to generate the trajectories of vehicles, and significant performance gain of the proposed solution is demonstrated in the simulation.
\end{itemize}

The remainder of this paper is organized as follows. In Section~\ref{sec: System Model}, the models of vehicles' mobility, uplink transmission, and task queuing are elaborated. 
In Section~\ref{sec: problem formulation}, the joint scheduling of uplink power, time, and BS association is formulated as a finite-horizon MDP.
Then, a novel two-time-scale solution framework based on successive action improvement is proposed in Section~\ref{sec: Proposed Framework}, where the optimization algorithm for super-slot-scale per-allocation is elaborated in Section~\ref{section: solution of offline pre-allocation} and the optimization algorithm for online scheduling is elaborated in Section~\ref{section: solution of dynamic improvement}, respectively.
The complexity and performance of the proposed solution is investigated in Section~\ref{section: complexity and performance analysis}, and the simulation results are demonstrated in Section~\ref{section: Performance Evaluation}. Finally, the conclusions are drawn in Section~\ref{sec: Conclusion}.

\textit{Notations}: The bold lowercase letters (\textit{e.g.}, $\boldsymbol{l}$) represent vectors, and the bold uppercase letters (\textit{e.g.}, $\boldsymbol{L}$) represent matrices. The element in the $i$-th row and $j$-th column of matrix $\boldsymbol{L}$ is denoted as $[\boldsymbol{L}]_{i,j}$; the $n$-th row of matrix $\boldsymbol{L}$ is denoted as $[\boldsymbol{L}]_n$.
Script symbols (\textit{e.g.}, $\mathcal{N}$) denote sets, and $\vert \mathcal{N} \vert$ indicates the cardinality of a set. Let $(\cdot)^+ = \max\{0, \cdot\}$, and $\llbracket S, T \rrbracket$ represent the set of integers from $S$ to $T$. 
The expectation operator is denoted by $\mathds{E}[\cdot]$, the probability of an event by $\mathds{P}[\cdot]$, and the indicator function by $\mathds{1}(\cdot)$. 
The set of real numbers is denoted by $\mathbb{R}$. 
We also use Landau notation, where $O(n)$ represents a function whose absolute value is bounded below by $cn$ for some constant $c > 0$.
Finally, a summary of notations are provided in Table~\ref{tab:notation}.

\renewcommand{\arraystretch}{1.2}
\begin{table}
	\centering
	\caption{Table of notations}	\label{tab:notation}
	\begin{tabular}{|c|c|}
		\hline
		\textbf{Parameter} & \textbf{Symbol}\\
		\hline
		Number/ Set of vehicles     & $N$, $\mathcal{N}$ \\ \hline
		Number/ Set of BSs          & $M$, $\mathcal{M}$ \\ \hline
	    $n$-th vehicle's Route     & $\mathcal{W}_n$ 	\\ \hline
	    Location of vehicle         & $\boldsymbol{l}_{n,t}$ \\ \hline 
	    Scheduling Period           & $\periodLen$ \\ \hline 
	    Slot duration               & $\slotLen$ \\ \hline
	    Super slot/ slot index      & $\iota$, $t$ \\ \hline
	    Association matrix          & $\boldsymbol{E}_t$ \\ \hline
	    System state                & $\state_t$ \\ \hline
	    Action                      & $\action_t$ \\ \hline
	    $t$-th Policy/ Overall Policy& $\Omega_t$, $\Omega$ \\ \hline
	    Reference Scheduling        & $\mathcal{R}_\iota$ \\ \hline
	    System Cost                 & $\mathsf{C}_t( \state_t, \Omega)$ \\ \hline
	\end{tabular}
\end{table}

\section{System Model}
\label{sec: System Model}

The task offloading of $\vehNum$ vehicles in a scheduling period of $\periodLen$ seconds via a cellular network of $\bsNum$ BSs is considered in this paper. The sets of vehicles and BSs are denoted as $\vehSet = \{ 1, 2, \ldots, \vehNum\}$ and $\bsSet = \{1,2,\ldots, \bsNum\}$, respectively. The vehicles traverse predetermined routes with random velocities. One computing task is generated at each vehicle during the scheduling period and uploaded to an edge server via the BSs. The edge server is connected to all the $M$ BSs. Since vehicles offload tasks in motion, handovers between BSs may occur during transmission. The BSs operate on different frequency bands, each with a bandwidth of $\bandwidth$ Hz. The vehicles' mobility model, uplink transmission model, and the tasks' queuing model are elaborated below.

\subsection{Vehicle Mobility Model}
The predetermined routes of all vehicles are known to a centralized offloading scheduler as in the popular ride-hailing services (e.g., Didi and Uber). To facilitate the offloading scheduling, each route is quantized into a sequence of waypoints. For example, the route of the $n$-th vehicle ($\forall n$) is defined as the ordered set of waypoints
\begin{blueeqns}
\begin{equation}
  \waypSet_n \triangleq \{\boldsymbol{l}^0_n, \boldsymbol{l}^1_n, \ldots, \boldsymbol{l}^{|\waypSet_n|-1}_n\},
\end{equation}
where $\boldsymbol{l}^0_n \in \mathbb{R}^{2 \times 1}$ is the initial position of the route of $n$-th vehicle at the beginning of the scheduling period. Each subsequent element, say $\boldsymbol{l}^i_n \in \mathbb{R}^{2 \times 1}$, represents the position of the $i$-th waypoint along the $n$-th vehicle's route.
\end{blueeqns}

The scheduling period of $\periodLen$ seconds is divided into $T$ time slots, and the slot duration is $\slotLen=\periodLen/T$. The position of the $n$-th vehicle versus time slot in the considered scheduling period, referred to as its trajectory, can be modeled as a stochastic process on $\waypSet_n$. Within the time scale of the considered scheduling period, the trajectory statistics are assumed to be stationary. As a remark, note that the trajectory statistics may vary in a larger time scale than the scheduling period. For example, a weekday morning should have a different traffic condition from a holiday morning. 

Let $\boldsymbol{l}_{n,t} \in \waypSet_n$ represent the location of $n$-th vehicle in the $t$-th slot. The trajectory of the $n$-th vehicle, denoted as 
\begin{blueeqns}
\begin{equation}
  (\boldsymbol{l}_{n,1}, \boldsymbol{l}_{n,2}, \ldots, \boldsymbol{l}_{n,T}),
\end{equation}
is modeled as a stationary Markov chain, where the transition probability matrix $\mathbf{P}_n \in \mathbb{R}^{\vert \waypSet_n \vert \times \vert \waypSet_n \vert}$ is defined as
\end{blueeqns}
\begin{equation}
 [\mathbf{P}_n]_{i,j} \triangleq \mathds{P}\left[\boldsymbol{l}_{n,t+1} = \boldsymbol{l}^{i}_n | \boldsymbol{l}_{n, t} = \boldsymbol{l}^j_n\right], \forall i, j,t,
\end{equation}
and $[\mathbf{P}_n]_{i,j}$ is the $(i,j)$-th element of matrix $\mathbf{P}_n$.
Moreover, the aggregation matrix $\boldsymbol{L}_t \in \mathbb{R}^{2 \times \vehNum}$ of $\vehNum$ vehicles' locations in $t$-th slot is given by
\begin{equation}
 \boldsymbol{L}_t \triangleq [\boldsymbol{l}_{1,t}, \boldsymbol{l}_{2,t}, \ldots, \boldsymbol{l}_{\vehNum,t}].
\end{equation}
It is assumed that the transition probability matrix $\mathbf{P}_n $ is known to the centralized offloading scheduler. In fact, a high-fidelity simulator has been proposed in \cite{zhang2021distributed} to obtain the traces of vehicles' trajectories for arbitrary road maps and traffic scenarios, such that the above transition matrix could be estimated based on the traces \cite[Sec. IV]{yunCongFedCars2024}.

\subsection{Uplink Model}
\label{section: Channel Model}

\begin{figure}[t]
  \centering
  \includegraphics[width=\linewidth]{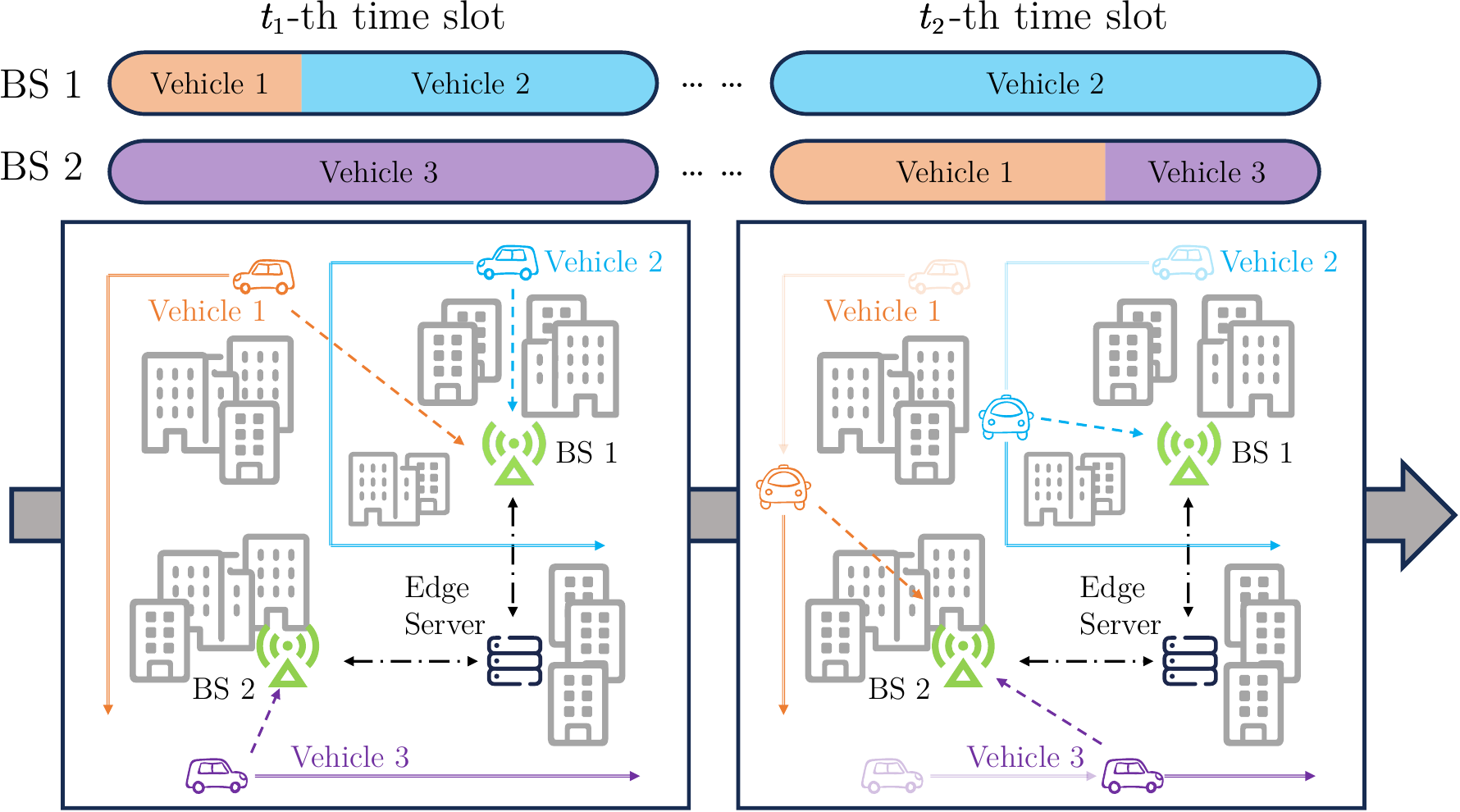}
  \caption{Conceptual illustration of vehicular task offloading. Three vehicles offload computation tasks to two BSs. The solid arrows represent the vehicles' routes, and the dashed arrows represent their BS associations. In the $t_1$-th time slot, the $1$-st and $2$-nd vehicles offload via the $1$-st BS; the $3$-rd vehicle offloads via the $2$-nd BS. In the $t_2$-th time slot, due to the vehicles' mobility, their BS association changes: the $1$-st and $3$-rd vehicles offload via the $2$-nd BS; the $2$-nd vehicle offloads via the $1$-st BS.}
  \label{fig: conceptual figure}
\end{figure}

In each time slot, each vehicle associates with at most one BS for the uplink transmission of its computing task. Let $e_{n,m,t} \in \mathscr{E} \triangleq \{0, 1\}$, $\forall n,m,t$, be the indicator of BS association, where $e_{n,m,t} = 1$ indicates the association of the $n$-th vehicle with the $m$-th BS in the $t$-th time slot and $e_{n,m,t} = 0$ otherwise.
Define the association matrix of the $t$-th time slot $\boldsymbol{E}_t \in \mathscr{E}^{\vehNum \times \bsNum}$ as
$
 \left[\boldsymbol{E}_t\right]_{n,m} \triangleq e_{n,m,t}.
$
The constraint of unique BS association can be written as
\begin{equation}
 \clabel{BS Association}: 
 \boldsymbol{E}_t \mathbf{1} = \mathbf{1}, \forall t,
\end{equation}
where $\mathbf{1}$ is the column vector of 1s.

The time-division multiple access (TDMA) mechanism is adopted in each cell. Let $\txTime_{n,t}$ be the uplink time ratio of the associated cell allocated to the $n$-th vehicle in the $t$-th time slot, and $ \boldsymbol{\txTime}_t \triangleq [\txTime_{1,t}, \txTime_{2,t}, \ldots, \txTime_{\vehNum,t}]^T \in \mathbb{R}^{\vehNum\times 1} $ be the aggregation vector of time ratios. The uplink transmission time of the $n$-th vehicle in the $t$-th time slot is $\txTime_{n,t} \slotLen$, and the transmission time constraints of all $M$ cells can be written as
\begin{equation}
 \clabel{Uplink Time Constraint 1}: \mathbf{0} \preceq \boldsymbol{\txTime}_t, \forall t, \quad
 \clabel{Uplink Time Constraint 2}:  \boldsymbol{\txTime}_t^T \boldsymbol{E}_t \preceq \mathbf{1}, \forall t,
\end{equation}
where $\preceq$ denotes the element-wise smaller or equal relationship, and $\mathbf{0} $ is the column vector of 0s.

Let $\{\boldsymbol{l}_1,...,\boldsymbol{l}_\bsNum\}$ be the locations of $\bsNum$ BSs, respectively. In the $t$-th time slot, given the locations of $n$-th vehicle, the path loss of its uplink channel to the $m$-th BS is modeled as 
\begin{equation}
 g_{n,m,t} = G \frac{1}{\| \boldsymbol{l}_{n,t} - \boldsymbol{l}_m \|_2^\gamma},
\end{equation}
where $G$ is a constant, and $\gamma$ denotes the path loss exponent. It is assumed that the time slot duration $\slotLen$ is sufficiently large, such that the ergodic channel capacity can be achieved. 
Although the uplink signals of the considered vehicles will not interfere each other due to orthogonal spectrum allocation, there is still co-channel interference from mobile users in the nearby cells with the same uplink spectrum. It is assumed that the aggregation of co-channel interference at each BS is stationary. Denote the aggregation of co-channel interference at the $m$-th BS as $I_m$.
Hence, the throughput of the $n$-th vehicle in the $t$-th time slot (if it is associated with the $m$-th BS) can be written as 
\begin{equation}
    \label{equation: Throughput}
 r_{n,t} = \txTime_{n,t} \slotLen \bandwidth \mathds{E}_{h,I_m} \left[\log_2 \left(1 + \frac{P_{n, t} g_{n,m,t} \vert h \vert^2}{\sigma^2 + I_m}\right) \right],
\end{equation}
where $\sigma^2$ is the average noise power, the expectation is taken over the fast fading coefficient $h$ and the co-channel interference $I_m$, and $P_{n,t}$ is the uplink power with the peak power constraint $P_{\max}$. Thus,
\begin{equation}
 \clabel{Peak Power Constraint}: 0 \leq P_{n,t} \leq P_{\max}, \forall n, t.
\end{equation} 

\subsection{Task Queuing Model}

In practice, the task arrivals of the vehicles are not synchronous. It is assumed that the computing task of the $n$-th vehicle arrives at the beginning of the $T_{n,a}$-th time slot, with a size of $D_{n,a}$ information bits. Hence, the vehicle initiates uplink transmission in the $T_{n,a}$-th time slot. The arrival tasks are buffered at the transmitters. Let $d_{n,t}$ be the number of information bits buffered at the $n$-th vehicle at the beginning of the $t$-th time slot. The uplink queue dynamics can be written as
\begin{equation}
 d_{n, t+1} = \begin{cases}
      0,                                   & t < T_{n,a}, \\
      \left(D_{n,a} - r_{n,t}\right)^+,      & t = T_{n,a}, \\
      \left(d_{n, t} - r_{n,t}\right)^+,   & \text{otherwise},
    \end{cases}
\end{equation}
where $\left( \cdot \right)^+=\max \{0, \cdot \}$. 

As a remark, notice that we focus on the offloading scheduling in a scheduling period of $\periodLen$ seconds. In practice, the time horizon can be divided into a sequence of scheduling periods, and our design can be applied on each scheduling period respectively. Different scheduling periods might be with different sets of vehicles in task offloading. Due to the late arrival, some tasks might not be completely offloaded in one scheduling period. They can be treated as new tasks in the next scheduling period. 

\section{Problem Formulation}
\label{sec: problem formulation}
The joint scheduling of BS association, time allocation and uplink power adaptation (or equivalently throughput adaptation) of all vehicles in all $T$ time slots is considered in this paper. Due to the randomness in the vehicles' mobility, the uplink channel cannot be predicted precisely in advance. This uncertainty necessitates a stochastic optimization approach. We therefore formulate the joint scheduling problem as a finite-horizon MDP. The system state, action, and policy are defined in the following.
\begin{Definition}[System State]
  \label{def:state}
 The system state in the $t$-th time slot ($\forall t \in \llbracket 1, T \rrbracket$) is a tuple of buffered information bits and the locations of all the vehicles, i.e.,
  \begin{equation}
 \state_t = (\boldsymbol{d}_t, \boldsymbol{L}_t),
  \end{equation}
 where vector $\boldsymbol{d}_t \triangleq [d_{1,t}, d_{2,t}, \ldots, d_{N,t}]$ denotes the buffered information bits for all vehicles.
\end{Definition}

\begin{Definition}[Policy]
  \label{def:policy}
 The action in the $t$-th time slot ($\forall t \in \llbracket 1, T \rrbracket$), denoted as $\action_t $, consists of the BS association, uplink time ratio, and uplink throughput\footnote{We define the uplink throughput, instead of uplink power, as the scheduling action for convenience of elaboration. In fact, they are equivalent.} of each vehicle, thus,
\begin{blueeqns}
\begin{equation}
  \action_t = (\boldsymbol{E}_t, \boldsymbol{\txTime}_t, \boldsymbol{r}_t),
\end{equation}
\end{blueeqns}
 where $\boldsymbol{r}_t\triangleq [r_{1,t}, r_{2,t},..., r_{N,t}]^T$ is the vector of throughput allocation for all the vehicles. 
 The policy $\Omega_t$ of the $t$-th time slot ($\forall t \in \llbracket 1, T \rrbracket$) is a mapping from the system state to the action, $\Omega_t: \state_t \mapsto \action_t$. The overall policy is defined as the aggregation of the policies of all the time slots, i.e.,
 \begin{blueeqns}
  \begin{equation}
    \Omega = ( \Omega_1, \Omega_2, \ldots, \Omega_T ).
  \end{equation}
  \end{blueeqns}
\end{Definition}

The joint scheduling design of this paper is to minimize the offloading latency of all tasks, while saving the uplink energy consumption of all vehicles. Hence, we consider a weighted summation of non-empty buffer penalty and uplink energy consumption as the system cost per time slot. Moreover, it is possible that the task offloading of one vehicle may not be accomplished at the end of the scheduling period, especially with a late task arrival or poor uplink channel condition. Hence, an additional penalty for unaccomplished offloading is considered in the $T$-th (last) time slot. Therefore, the cost function for $n$-th vehicle in the $t$-th time slot is written as
\begin{equation}
  \label{equation: per-slot cost function}
 \mathsf{c}_{n,t} (\state_{t}, \action_{t}) \triangleq \begin{cases}
 \mathds{1}( d_{n,t} > 0 ) + \omega_1 P_{n,t} \txTime_{n,t} ,         & t < T , \\
  \begin{aligned}
    &\mathds{1}( d_{n,t} > 0 ) + \omega_1 P_{n,t} \txTime_{n,t} \\
    &~+ \omega_2 d_{n,T+1},
  \end{aligned} & t = T,
  \end{cases}
\end{equation}
where the indicator function $\mathds{1}(\mathcal{C})$ is $1$ when the event $\mathcal{C}$ happens and $0$ otherwise, $\omega_1$ and $\omega_2$ are two weights, and $\slotLen$ is absorbed into $\omega_1$ for notation convenience.
A larger value of $\omega_1$ prioritizes the minimization of energy consumption, while a larger $\omega_2$ increases the penalty for residual information bits at the last time slot. In fact, these two parameters can also be considered as the Lagrange multipliers of a MDP with the constraints on the average energy consumption and the average number of residual information bits. By adjusting the values of $\omega_1$ and $\omega_2$, the Pareto frontier of average transmission time, average energy consumption and average number of residual information bits can be achieved.

The system cost of the $t$-th time slot is then given by 
\begin{blueeqns}
\begin{equation}
  \mathsf{c}_{t} (\state_{t}, \action_{t}) = \sum_{n} \mathsf{c}_{n,t} (\state_{t}, \action_{t}).
\end{equation}
Hence, the average total cost of the system from the $t$-th time slot can be written as
\end{blueeqns}
\begin{equation}
  \label{equation:objective}
 \mathsf{C}_t( \state_t, \Omega ) \triangleq \mathds{E}_{\state_{t+1}, \ldots, \state_T} \left[ \sum_{k = t}^{T} \mathsf{c}_{k} (\state_{k}, \action_{k}) \middle| \state_t , \Omega\right], 
\end{equation}
and the average total cost in one scheduling period can be written as $\mathsf{C}_1( \state_1, \Omega )$.
As a result, the joint scheduling problem can be formulated as a finite-horizon MDP as follows:
\begin{equation*}
  \begin{aligned}
    \plabel{Value Function}: \min_{\Omega }  \quad  & \mathsf{C}_1( \state_1, \Omega ), \\
    \text{s.t.} \quad & \cref{BS Association}, \cref{Uplink Time Constraint 1}, \cref{Uplink Time Constraint 2}, \cref{Peak Power Constraint}.
  \end{aligned}
\end{equation*}

Finding the optimal solution of \pref{Value Function} is equal to solving the following Bellman's equation~\cite{bertsekas1996dynamic} in each time slot:
\begin{equation}
  \label{equation: Bellman}
 V_{t}(\state_t) = \min_{\action_t} \! \left\{\! \mathsf{c}_t(\state_t, \action_t) \! + \! \sum_{\state_{t+1}} \mathds{P}\left[ \state_{t+1} | \state_{t}, \action_t \right] V_{t+1}(\state_{t+1}) \!\right\},
\end{equation}
where the optimal value function $V_t(\state_t)$ is the minimum average cost since the $t$-th time slot with the system state $\state_t$. Thus,
\begin{equation}
  \label{definition: optimal value function}
 V_t(\state_t) = \min_{\Omega_t,\ldots,\Omega_T }\mathds{E}_{\state_{t+1}, \ldots, \state_T} \left[ \sum_{k=t}^{T} \mathsf{c}_k (\state_k, \action_k) \middle| \state_t \right], \forall t.
\end{equation}
Particularly, the optimal action of the $t$-th time slot ($\forall t \in \llbracket 1, T \rrbracket$) can be obtained by solving the right-hand side (RHS) of the above Bellman's equations as
\begin{equation}\label{equation: Bellman-RHS}
\Omega_t^{\ast}(\state_t)=\argmin_{\action_t}  \mathsf{c}_t(\state_t, \action_t) + \sum_{\state_{t+1}} \mathds{P}\left[ \state_{t+1} | \state_{t}, \action_t \right] V_{t+1}(\state_{t+1}).
\end{equation}

\section{Successive Improvement Solution Framework}
\label{sec: Proposed Framework}

\begin{figure}
    \centering
    \includegraphics[width=\columnwidth]{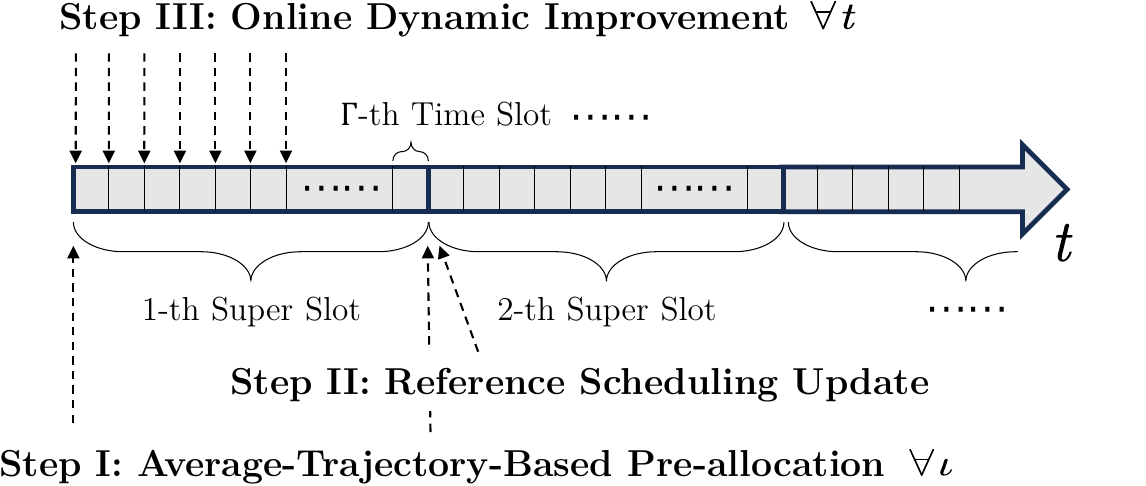}
    \caption{Illustration of the proposed dynamic improvement solution.}
    \label{figure: framework}
\end{figure}

The optimal solution of the finite-horizon MDP usually suffers from the {\it curse of dimensionality} \cite{powellApproximateDynamicProgramming2007}, where the computation complexity increases exponentially with the dimension of states (e.g., the number of vehicles). Particularly, the complexity of optimal solution lies in the evaluation of optimal value functions defined in \eqref{definition: optimal value function}, which are determined by the optimal scheduling policies.
To reduce the optimization complexity, an attractive approach is to approximate the optimal value function at the RHS of the Bellman's equation, and then the online scheduling based on the RHS of the Bellman's equation can be subsequently applied. Note that the approximation without physical meaning could hardly be analyzed, e.g., linear approximation \cite{rashid2020monotonic, sunehag2017value}. In this paper, we would address the approximation with physical meaning.

In our prior work \cite{li2025dynamic}, deterministic scheduling actions are used to evaluate the future system cost, serving as the approximations of the optimal value functions. Although such approximation is with physical meaning and leads to an achievable average total system cost, the deterministic scheduling actions may lead to low performance in a stochastic system. Hence, in this section, a novel successive improvement framework with two-time-scale optimization, as illustrated in \figurename~\ref{figure: framework}, is proposed to achieve a better solution of approximate MDP (AMDP).

In the proposed framework, every successive $\spLen$ time slots is organized as a super slot. Hence, there are $\spNum= T/\spLen$ super slots within the scheduling period. Then, deterministic scheduling actions for the remaining time slots are updated at the beginning of each super slot, which is referred to as the reference scheduling. The reference scheduling is used to approximate the optimal value function, such that the actual scheduling action of each time slot in the super slot can be optimized according to \eqref{equation: Bellman-RHS}. As a remark, note that the periodic update of reference scheduling in each super slot is to suppress the loss of the value function approximation. As a result, the proposed successive improvement solution framework consists of the following iterative steps.

{\bf Step I (Average-Trajectory-Based Pre-Allocation):} At the beginning of each super slot (say the $\spIdx$-th super slot), the BS association, time allocation, and throughput allocation of all the remaining time slots are optimized according to (deterministic) average trajectories of vehicles, yielding a deterministic optimization problem. Denote the solution as $\mathcal{R}_{\spIdx}^{pre}$, which is defined in Section~\ref{section: offline pre-allocation}. As a remark, note that the optimization solution $\mathcal{R}_{\spIdx}^{pre}$ is the candidate for value function approximation, rather than actual scheduling.

{\bf Step II (Reference Scheduling Update):} Let $\mathcal{R}_{\spIdx}$ be the reference scheduling of the $\spIdx$-th super slot. If $\spIdx=1$, the above optimization solution $\mathcal{R}_{\spIdx}^{pre}$ is used as the reference scheduling, denoted as $\mathcal{R}_{1}=\mathcal{R}_{1}^{pre}$. Otherwise, the average system cost of $\mathcal{R}_{\spIdx}^{pre}$ is compared with the revised reference scheduling of the $(\spIdx-1)$-th super slot, denoted as $\mathcal{R}_{\spIdx-1}^{post}$ (see Step III). The one with the less average system cost is treated as the reference scheduling of the $\spIdx$-th super slot $\mathcal{R}_{\spIdx}$.

{\bf Step III (Online Scheduling):} The above reference scheduling $\mathcal{R}_{\spIdx}$ is used to approximate the optimal value functions in the $\spIdx$-th super slot. Then, the actual scheduling action of each time slot in the $\spIdx$-th super slot can be obtained by solving \eqref{equation: Bellman-RHS}. At the end of the super slot, the reference scheduling $\mathcal{R}_{\spIdx}$ is revised to $\mathcal{R}_{\spIdx}^{post}$ for the next iteration. If $\spIdx=\spNum$, the iteration terminates; otherwise, $\spIdx=\spIdx+1$ and jumps back to the first step.

The comparison made in the second step is to ensure that the performance of reference scheduling in one super slot is not worse than the one in the previous super slot. This property will be exploited to prove an upper bound of the average system cost of the proposed solution. Moreover, due to the online scheduling, the reference scheduling determined at the beginning of one super slot may not be feasible at the end of the same super slot. The revision in the third step is to ensure the feasibility of the reference scheduling. The above three steps are further elaborated in the following three parts, respectively.

\subsection{Average-Trajectory-Based Pre-allocation}
\label{section: offline pre-allocation}

Without loss of generality, the average-trajectory-based pre-allocation in the $\spIdx$-th super slot ($\spIdx \in \llbracket 1, \spNum \rrbracket$) is elaborated in this part. 
The pre-allocation is conducted by assuming that all the vehicles move along their deterministic average trajectories. 
Then, the BS association, time, and throughput allocation for all the remaining time slots can be optimized via a deterministic optimization problem, which is subsequently considered as the reference scheduling in the following online scheduling. 

The average trajectory of the $n$-th vehicle since the $\spIdx$-th super slot can be expressed as 
\begin{blueeqns}
\begin{equation}
    \boldsymbol{L}^{(\spIdx)}_n = \left(\boldsymbol{l}_{n,(\iota-1)\spLen+1}^{(\spIdx)},...,\boldsymbol{l}_{n,T}^{(\spIdx)}\right),
\end{equation}
\end{blueeqns}
where
\begin{equation}
    \boldsymbol{l}_{n,t}^{(\spIdx)} \triangleq \mathds{E}\left[ \boldsymbol{l}_{n,t} | \boldsymbol{l}_{n, (\iota-1)\spLen + 1 }\right], \forall t=(\iota-1)\spLen + 1,...,T,
\end{equation}
represents the average location of the $n$-th vehicle at the $t$-th time slot. Moreover, the aggregation of average trajectories of all vehicles is denoted as $\mathcal{L}_\spIdx \triangleq \{ \boldsymbol{L}^{(\spIdx)}_n | n \in \mathcal{N}\}$.

Note that above average trajectories are deterministic. Let
\begin{blueeqns}
\begin{align}
	\mathcal{E}_{\iota } \triangleq &\left\{ \boldsymbol{E}_{(\iota-1)\spLen + 1  }, \ldots, \boldsymbol{E}_T \right\}, \\
	\mathcal{T}_{\iota } \triangleq &\left\{ \boldsymbol{\txTime}_{(\iota-1)\spLen + 1  }, \ldots, \boldsymbol{\txTime}_T \right\}, \\
	\Upsilon_{\iota } \triangleq &\left\{ \boldsymbol{r}_{(\iota-1)\spLen + 1 }, \ldots, \boldsymbol{r}_T \right\},
\end{align}
\end{blueeqns}
be the aggregation of BS association, time and throughput allocation for all the remaining time slots, assuming all vehicles are moving along the above average trajectories. In this case, the expectation of the random trajectories in \eqref{equation:objective} can be omitted, and the stochastic optimization of throughput allocation in \pref{Value Function} can be simplified to the following deterministic optimization problem:

\begin{equation*}
	\begin{aligned}
		\plabel{Reference Policy}:
		\min_{ 
			\substack{ 
				\mathcal{E}_{\iota }, 
                \mathcal{T}_{\iota }, \\ 
                \Upsilon_{\iota }
			}}  
			\quad  &\sum_{t=(\iota-1)\spLen + 1  }^{T} \sum_{n \in \mathcal{N}} \mathsf{c}_{n,t} \left(
                d_{n,t}, \boldsymbol{l}_{n,t}^{(\spIdx)}, [\boldsymbol{E}_t]_n, \txTime_{n,t}, r_{ n,t}
			\right), \\
		\text{s.t.} \quad &\cref{BS Association}, \cref{Uplink Time Constraint 1}, \cref{Uplink Time Constraint 2}, \cref{Peak Power Constraint}.
	\end{aligned}
\end{equation*}
where $[\boldsymbol{E}_t]_n$ denotes the $n$-th row vector of $\boldsymbol{E}_t$. The solution of \pref{Reference Policy} is denoted as $\mathcal{R}_{\spIdx}^{pre} \triangleq (\mathcal{E}_{\iota }^\ast, \mathcal{T}_{\iota }^\ast, \Upsilon_{\iota }^\ast)$. In order to keep the elaboration of solution framework clear, the solution algorithm for \pref{Reference Policy} will be elaborated separately in Section~\ref{section: solution of offline pre-allocation}. 

\begin{table*}[!t]
    \centering
    \caption{Definition of $f_{n, t+1}^1(d_{n,t+1}, \varUpsilon_{n,t+1})$ and $f_{n, t + 1}^2(d_{n,t+1}, \boldsymbol{l}_{n,t+1} , \varUpsilon_{n,t+1})$.}
    \label{table: case function}
    \begin{tabular}{c|c|l}
    \hline
     Condition  & $f_{n, t+1}^1(d_{n,t+1}, \varUpsilon_{n,t+1})$ & $f_{n, t + 1}^2(d_{n,t+1}, \boldsymbol{l}_{n,t+1} , \varUpsilon_{n,t+1})$  \\ \hline
     $d_{n,t+1} \le 0$ &  $0$      &$0$              \\ \hline
    $ 
    \begingroup
        \renewcommand*{\arraystretch}{1.5}
        \begin{matrix}
            \sum_{\kappa= 1}^{k} r_{n,t+k} < d_{n,t+1} \le \sum_{\kappa= 1}^{k+1} r_{n,t+k}, \\
            k \in \llbracket 1, T-t-1 \rrbracket
        \end{matrix}
    \endgroup
    $ 
    & 
    $k+1$ 
    &        
    $
    \begin{aligned}
        & \sum_{\kappa= 1}^{k }\txTime_{n,t+\kappa}^{(\spIdx)} \varPhi_{n, t+\kappa|t+1} 2^{\frac{r_{n,t+\kappa}}{\txTime_{n,t+\kappa}^{(\spIdx)} \slotLen \bandwidth}}\\
        & + \txTime_{n,t+k+1}^{(0)} \varPhi_{n, t+k+1|t+1} 2^{\frac{d_{n, t} - \sum_{\kappa= 1}^{k} r_{n,t+k}}{ \txTime_{n,t+k+1}^{(\spIdx)} \slotLen \bandwidth }}
    \end{aligned}
    $ \\ \hline
    $\sum_{\kappa= 1}^{T - t} { r_{n,t+\kappa} } < d_{n,t+1}$  & $T - t + \omega_2\left(d_{n,t+1} - \sum_{\kappa= 1}^{T - t} {r_{n,t + \kappa} }\right)$  & $ \sum_{\kappa= 1}^{T-t}\txTime_{n,t+\kappa}^{(\spIdx)} \varPhi_{n, t+\kappa|t+1} 2^{\frac{r_{n,t+k}}{\txTime_{n,t+\kappa}^{(\spIdx)}\slotLen \bandwidth}}$     \\ \hline
    \end{tabular}
\end{table*}

\subsection{Reference Scheduling Update}

As mentioned above, at the beginning of the $\spIdx$-th super slot, a scheduling policy $\mathcal{R}_{\spIdx}^{pre}$ based on average trajectories is derived. In the first super slot ($\spIdx = 1$), $\mathcal{R}_{1}^{pre}$ is directly used as the reference scheduling. In the following $\spIdx$-th super slot ($\spIdx>1$), $\mathcal{R}_{\spIdx}^{pre}$ is compared with the revised reference scheduling of the previous super slot, denoted as $\mathcal{R}_{\spIdx-1}^{post}$. The one with the lower average total cost is chosen as the reference scheduling of the $\spIdx$-th super slot, denoted as $\mathcal{R}_{\spIdx}$.

Specifically, let $\action_{t,\spIdx}^{pre}$ and $\action_{t,\spIdx-1}^{post}$
be the scheduling actions (including the BS association, time and throughput allocation) of the $t$-th time slot in $\mathcal{R}_{\spIdx}^{pre}$ and $\mathcal{R}_{\spIdx-1}^{post}$, respectively. The average total cost of $\mathcal{R}_{\spIdx}^{pre}$ and $\mathcal{R}_{\spIdx-1}^{post}$ is given by
\begin{equation}
    \mathsf{C}_t( \state_t, \mathcal{R}_{\spIdx}^{pre}) = \mathds{E}_{\state_{t+1}, \ldots, \state_T} \left[ \sum_{k = t}^{T} \mathsf{c}_{k} (\state_{k}, \action_{k,\spIdx}^{pre}) \middle| \state_t\right],
\end{equation}
and 
\begin{equation}
    \mathsf{C}_t( \state_t, \mathcal{R}_{\spIdx-1}^{post}) = \mathds{E}_{\state_{t+1}, \ldots, \state_T} \left[ \sum_{k = t}^{T} \mathsf{c}_{k} (\state_{k}, \action_{k,\spIdx-1}^{post}) \middle| \state_t\right].
\end{equation}
Hence, the reference scheduling of the $\spIdx$-th super slot can be determined as
\begin{equation}
    \label{equation: reference scheduling update}
    \begin{aligned}
        \mathcal{R}_{\spIdx} = \argmin_{\mathcal{R}\in \{\mathcal{R}_{\spIdx}^{pre},\mathcal{R}_{\spIdx-1}^{post}\}}  \mathsf{C}_t( \state_t, \mathcal{R}).
    \end{aligned}
\end{equation}

\subsection{Online Scheduling}

For convenience of elaboration, denote the reference scheduling of the $\spIdx$-th super slot by 
\begin{blueeqns}
\begin{equation}
    \mathcal{R}_{\spIdx}=\{ \action_{t,\spIdx}^{ref}|t=(\iota-1)\spLen + 1,...,T\},
\end{equation}
\end{blueeqns}
where the scheduling action of the $t$-th time slot is
\begin{equation}
    \label{equation: reference action}
    \action_{t,\spIdx}^{ref} = \left( \boldsymbol{E}_{t}^{(\spIdx)}, \boldsymbol{\txTime}_{t}^{(\spIdx)}, \boldsymbol{r}_{t}^{(\spIdx)} \right).
\end{equation}
Then, the online improvement via \eqref{equation: Bellman-RHS} is applied to each time slot of this super slot (say the $t$-th time slot, $t=(\iota-1)\spLen + 1,...,\iota\spLen$), where the optimal value function is approximated based on  $\mathcal{R}_{\spIdx}$.
Particularly, the approximation of optimal value function is given below.

\begin{Lemma}[Asymptotically Achievable Average Cost]\label{lemma: Asymptotically Achievable Average Cost} Denote the aggregation of the throughput allocation of the $n$-th vehicle since the $t$-th time slot as
\begin{equation}
        \varUpsilon_{n,t+1} \triangleq (r_{n,t+1}, r_{n,t+2}, ..., r_{n,T}).\label{symbol: throughput allocation}
\end{equation}
Given the number of residual information bits  $d_{n,t+1}$, the position $\boldsymbol{l}_{n,t+1}$, the throughput allocation $\varUpsilon_{n,t+1}$, and sufficiently large peak power constraint, following the BS association and time allocation of reference scheduling $\mathcal{R}_{\spIdx}$, the average cost of the $n$-th vehicle since the $(t+1)$-th time slot can be written as
    \begin{multline}
        \label{equation: achievable cost}
        \widetilde{V}_{n,t+1}\left(
            d_{n,t+1},\boldsymbol{l}_{n,t+1},
            \varUpsilon_{n,t+1}
            \right) = f_{n, t+1}^1(d_{n,t+1}, \varUpsilon_{n,t+1}) \\  + f_{n, t + 1}^2(d_{n,t+1}, \boldsymbol{l}_{n,t+1} , \varUpsilon_{n,t+1}), 
    \end{multline}
    where the functions $ f_{n, t+1}^1 $ and $f_{n, t+1}^2 $ are defined in Table \ref{table: case function}, and
    \begin{blueeqns}
    \begin{equation}
        \varPhi_{n, t+k|t+1} \triangleq 2^{-  \mathds{E}\left[\log_2 \frac{G\vert h \vert^2}{\sigma^2}\right]} \mathds{E} [\Vert \boldsymbol{l}_{n,t+k} - \boldsymbol{l}_m \Vert_2^{\gamma}| \boldsymbol{l}_{n,t+1}].
    \end{equation}
    \end{blueeqns}
\end{Lemma}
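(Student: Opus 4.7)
The plan is to decompose the per-vehicle tail cost into a queue/indicator part and an energy part, derive each in closed form under the reference scheduling $(\boldsymbol{E}_t^{(\spIdx)},\boldsymbol{\txTime}_t^{(\spIdx)})$, and then identify the resulting expressions with $f_{n,t+1}^1$ and $f_{n,t+1}^2$. Concretely, starting from \eqref{equation: per-slot cost function}, I would write
\begin{equation*}
\sum_{k=t+1}^{T}\mathsf{c}_{n,k}=\underbrace{\sum_{k=t+1}^{T}\mathds{1}(d_{n,k}>0)+\omega_2 d_{n,T+1}}_{\text{indicator/residual}}\;+\;\underbrace{\omega_1\sum_{k=t+1}^{T}P_{n,k}\txTime_{n,k}^{(\spIdx)}}_{\text{energy}},
\end{equation*}
then take $\mathds{E}[\,\cdot\,\mid \state_{t+1}]$ and treat the two groups separately. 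The absorption of $\slotLen$ into $\omega_1$ is already stipulated in the model.

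First I would handle $f^1$. Given the deterministic throughput schedule $\varUpsilon_{n,t+1}$, the queue recursion $d_{n,k+1}=(d_{n,k}-r_{n,k})^+$ is itself deterministic, so $d_{n,t+k+1}=\bigl(d_{n,t+1}-\sum_{\kappa=1}^{k}r_{n,t+\kappa}\bigr)^+$. The buffer is nonempty in slot $t+k+1$ iff $\sum_{\kappa=1}^{k}r_{n,t+\kappa}<d_{n,t+1}$. Summing the indicators therefore yields exactly $k+1$ when $d_{n,t+1}\in\bigl(\sum_{\kappa=1}^{k}r_{n,t+\kappa},\sum_{\kappa=1}^{k+1}r_{n,t+\kappa}\bigr]$ and $T-t$ when the tasks are not cleared within the horizon; in the latter case the residual $\omega_2(d_{n,t+1}-\sum_{\kappa=1}^{T-t}r_{n,t+\kappa})$ is added. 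A trivial case distinction for $d_{n,t+1}\le 0$ completes the three rows of the $f^1$ column in Table~\ref{table: case function}.

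Next I would derive $f^2$. Using the invariance of the BS association/time allocation under reference scheduling, for each future slot $t+\kappa$ I would invert \eqref{equation: Throughput} to express $P_{n,t+\kappa}$ in terms of $r_{n,t+\kappa}$. The ``sufficiently large peak power'' hypothesis is used here to apply the high-SNR surrogate $\log_2(1+x)\approx \log_2 x$ (and to keep $P_{n,t+\kappa}\le P_{\max}$ inactive), giving
\begin{equation*}
P_{n,t+\kappa}\approx\|\boldsymbol{l}_{n,t+\kappa}-\boldsymbol{l}_m\|_2^{\gamma}\cdot 2^{-\mathds{E}[\log_2(G|h|^2/\sigma^2)]}\cdot 2^{r_{n,t+\kappa}/(\txTime_{n,t+\kappa}^{(\spIdx)}\slotLen\bandwidth)},
\end{equation*}
with $m$ the BS selected by $\boldsymbol{E}_t^{(\spIdx)}$. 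Multiplying by $\txTime_{n,t+\kappa}^{(\spIdx)}$, taking conditional expectation over the random trajectory $\boldsymbol{l}_{n,t+\kappa}$ given $\boldsymbol{l}_{n,t+1}$ (independent of the fast fading), and recognizing the definition of $\varPhi_{n,t+\kappa|t+1}$ yields the per-slot term $\txTime_{n,t+\kappa}^{(\spIdx)}\varPhi_{n,t+\kappa|t+1}2^{r_{n,t+\kappa}/(\txTime_{n,t+\kappa}^{(\spIdx)}\slotLen\bandwidth)}$. Summing these contributions only up to the slot in which the buffer empties (because no further transmission occurs, so $P_{n,t+\kappa}=0$ afterwards) reproduces the three rows of $f^2$ in Table~\ref{table: case function}, with the last partial transmission using the residual size $d_{n,t+1}-\sum_{\kappa=1}^{k}r_{n,t+\kappa}$ in the exponent.

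The main obstacle I expect is the energy part, specifically justifying cleanly that the high-SNR inversion of \eqref{equation: Throughput} is tight under the peak-power hypothesis and that the fast-fading expectation decouples from the trajectory expectation so the simple product $\varPhi_{n,t+\kappa|t+1}\cdot 2^{r/(\txTime\slotLen\bandwidth)}$ emerges. The indicator/residual accounting for $f^1$ is bookkeeping once the deterministic queue recursion is used, but some care is needed in the boundary case $d_{n,t+1}=\sum_{\kappa=1}^{k+1}r_{n,t+\kappa}$ (buffer just cleared) to match the closed/open intervals in the table, and in combining the $t=T$ residual term with the case where the queue never empties so as to obtain the second entry of the $f^1$ column in the final row.
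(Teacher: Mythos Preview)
Your proposal is correct and follows essentially the same route as the paper: split the tail cost into the deterministic indicator/residual part (giving $f^1$ by case analysis on where $d_{n,t+1}$ lands relative to the cumulative sums) and the energy part (giving $f^2$ by inverting \eqref{equation: Throughput} under the high-SNR approximation and then averaging the path-loss factor over the random trajectory conditioned on $\boldsymbol{l}_{n,t+1}$). If anything, you are more explicit than the paper about the deterministic queue recursion and the boundary/partial-transmission cases, and your flagged ``obstacle'' about decoupling the fast-fading expectation from the trajectory expectation is exactly the step the paper takes silently.
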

\begin{proof}
    Please refer to Appendix \ref{appendix: achievable cost}.
\end{proof}

Hence, the expected value function $\sum_{\state_{t+1}} \mathds{P}\left[ \state_{t+1} | \state_{t}, \action_t \right] V_{t+1}(\state_{t+1})$ in \eqref{equation: Bellman-RHS} is approximated as 
\begin{equation}
    \begin{aligned}
    \label{eq: Q-factor}
    &\sum_{\state_{t+1}} \mathds{P}\left[ \state_{t+1} | \state_{t}, \action_t \right] V_{t+1}(\state_{t+1}) \\
    \approx &~ \widetilde{Q}_{t}(\state_{t}, \action_t; \mathcal{R}_{\spIdx})  \\ 
    \triangleq &\min_{\{\varUpsilon_{n,t+1} | n \in \mathcal{N} \}} \sum_{\state_{t+1}} \mathds{P}\left[ \state_{t+1} | \state_{t}, \action_t \right] \\
    &\times \sum_{n \in \mathcal{N}} \widetilde{V}_{n,t+1}\left(
            d_{n,t+1},\boldsymbol{l}_{n,t+1},
            \varUpsilon_{n,t+1}\right).
    \end{aligned}
\end{equation}
As a result, the scheduling action of the $t$-th time slot can be approximately obtained via the following optimization problem.
\begin{equation*}
\begin{aligned}
\plabel{Online Optimization Problem}: \action_t^*=&\argmin_{ \action_t }\!\!\!\!&& 
\mathsf{c}_t(\state_t, \action_t) + \widetilde{Q}_{t}(\state_{t}, \action_t; \mathcal{R}_{\spIdx}) \\ 
= &\argmin_{\substack{\action_t,\\\{\varUpsilon_{n,t+1} | n \in \mathcal{N} \}}} \!\!\!\!&& \mathsf{c}_t(\state_t, \action_t) + \sum_{\state_{t+1}} \mathds{P}\left[ \state_{t+1} | \state_{t}, \action_t \right] \\
& &&\times \!\sum_{n \in \mathcal{N}} \widetilde{V}_{n,t+1}\left(
        d_{n,t+1},\boldsymbol{l}_{n,t+1},
        \varUpsilon_{n,t+1}\right).\\
&\quad\quad\ \text{s.t.}  && \cref{BS Association}, \cref{Uplink Time Constraint 1}, \cref{Uplink Time Constraint 2}, \cref{Peak Power Constraint}. 
\end{aligned}
\end{equation*}

The solution of the above online successive improvement will be elaborated in Section~\ref{section: solution of dynamic improvement}, and denote optimal online scheduling as 
\begin{blueeqns}
\begin{equation}
    \widetilde{\Omega} \triangleq \{ \action_t^*, t = 1, ..., T\}.
\end{equation}
\end{blueeqns}

At the end of the $\spIdx$-th super slot, the reference scheduling $\mathcal{R}_{\spIdx}$ is revised to $\mathcal{R}_{\spIdx}^{post}$ as follows for the next super slot:
\begin{equation}
    \mathcal{R}_{\spIdx}^{post} = \{(\boldsymbol{E}_{t}^{(\spIdx)}, \boldsymbol{\txTime}_{t}^{(\spIdx)}, \boldsymbol{r}_{t, \iota}^{post})|t=\iota\spLen + 1, ..., T\},
\end{equation}
where $\boldsymbol{E}_{t}^{(\spIdx)}$ and $\boldsymbol{\txTime}_{t}^{(\spIdx)}$ are defined in \eqref{equation: reference action}, and
\begin{eqnarray}
    &&\boldsymbol{r}_{t, \iota}^{post} \nonumber\\
    & \triangleq& \left\{\left(r_{n,t+1}^{post}, r_{n,t+2}^{post}, ..., r_{n,T}^{post}\right) \middle|\forall n \in \mathcal{N} \right\} \nonumber\\
    &=& \argmin_{\{\varUpsilon_{n,\iota\spLen + 1} | \forall n \in \mathcal{N} \}} \sum_{\state_{\iota\spLen + 1}} \mathds{P}\left[ \state_{\iota\spLen + 1} | \state_{\iota\spLen}, \action_{\iota\spLen}^\ast \right] \nonumber \\
    &&\times \sum_{n \in \mathcal{N}} \widetilde{V}_{n,\iota\spLen + 1}\left(
        d_{n,\iota\spLen + 1},\boldsymbol{l}_{n,\iota\spLen + 1},
        \varUpsilon_{n,\iota\spLen + 1}\right).
\end{eqnarray}

\section{Solution Algorithm for Problem \pref{Reference Policy}} %
\label{section: solution of offline pre-allocation}
Problem \pref{Reference Policy} is a mixed-integer programming due to the BS association. Even without the optimization of BS association, the problem remains non-convex. In this section, an alternating optimization method is proposed to solve problem \pref{Reference Policy} iteratively, where $(\mathcal{E}_{\iota },\mathcal{T}_{\iota })$ and $\Upsilon_{\iota }$ are optimized alternatively in each iteration.

Denote the BS association, uplink time, and throughput allocation in the $i$-th iteration as $\mathcal{E}_{\iota }^i$, $\mathcal{T}_{\iota }^i$ and $\Upsilon_{\iota }^{i}$, respectively.
In each iteration (say the $i$-th iteration), the BS association $\mathcal{E}_{\iota }^i$ and time allocation $\mathcal{T}_{\iota }^i$ are first jointly optimized, given the throughput allocation from the previous iteration $\Upsilon_{\iota }^{i-1}$. 
Then, the throughput allocation $\Upsilon_{\iota }^{i}$ is optimized given the BS association $\mathcal{E}_{\iota }^i$ and time allocation $\mathcal{T}_{\iota }^i$. Both optimizations are elaborated below.

\subsubsection{Optimization of the BS association and time allocation}
Given throughput allocation  $\Upsilon_{\iota }^{i-1}$, the optimization of the BS association and time allocation can be written as
\begin{equation*}
	\begin{aligned}
		\plabel{Reference Policy - mixed integer BS time allocation}: \min_{ 
			\substack{ 
				\mathcal{E}_{\iota }, \mathcal{T}_{\iota }
			}}  \quad  & \omega_1 \sum_{t=(\iota-1)\spLen + 1  }^{T} \sum_{n \in \mathcal{N}} \sum_{m \in \mathcal{M}} 2^{\frac{ e_{n,m,t} r_{n,t}^{i-1}}{\txTime_{n,t} \slotLen \bandwidth}} \Phi_{n, m,t} \txTime_{n,t}, \\
		\text{s.t.} \quad & \txTime_{n,t} \ge 0,\forall n, t, \\
		&   e_{n,m,t} \in \{0, 1\},\forall n, m, t, \\
		& \sum_{n \in \mathcal{N}} \txTime_{n,t} e_{n,m,t} \le 1, \forall m,t, \\
		& \sum_{m \in \mathcal{M}} e_{n,m,t} = 1, \forall n,t, \\
		& 2^{\frac{ e_{n,m,t} r_{n,t}^{i-1}}{\txTime_{n,t} \slotLen \bandwidth}} \Phi_{n, m,t} \le P_{\max}, \forall n,m,t
	\end{aligned}
\end{equation*}
where $r_{n,t}^{i-1}$ is the throughput allocation for the $n$-th vehicle in $\Upsilon_{\iota }^{i-1}$, and 
$$\Phi_{n,m,t} \triangleq 2^{-\mathds{E}_h \left[ \log_2 \frac{g_{n,m,t}\vert h \vert^2}{\sigma^2} \right]}.$$ 
The transmission time cost and data cost in the objective of \pref{Reference Policy} are omitted since they are constants. The third constraint ensures that the total transmission time to each BS is limited to one time slot, while the fourth constraint guarantees that each vehicle is associated with only one BS per time slot.

The above optimization problem is non-convex, it can be solved by relaxing the integer constraints: each vehicle can transmit to multiple BSs in one time slot. Particularly, we relax the binary constraint of $e_{n,m,t}\in \{0,1\}$ to $e_{n,m,t} \in [0, 1]$, where $e_{n,m,t}$ after relaxation represents the data fraction transmitted from the $n$-th vehicle to the $m$-th BS in the $t$-th time slot. Accordingly, the uplink time variables are extended from $\mathcal{T}_{ \iota  }$ to
$$
\hat{\mathcal{T}}_{ \iota  } \triangleq \left( \hat{\boldsymbol{\txTime}}_{(\iota-1)\spLen + 1 }, \ldots, \hat{\boldsymbol{\txTime}}_T \right),
$$
where $\hat{\boldsymbol{\txTime}}_{t} \in \mathbb{R}^{N \times M}$ is a time allocation matrix, and its element $\hat{\txTime}_{n,m,t} = [\hat{\boldsymbol{\txTime}}_{\iota }]_{n,m}$ denotes the uplink time of the $n$-th vehicle to the $m$-th BS in the $t$-th time slot.
The relaxed optimization problem can then be written as
\begin{equation*}
	\begin{aligned}
		\plabel{Reference Policy - BS time allocation}: \min_{ 
			\substack{ 
				\mathcal{E}_{\iota }, \hat{\mathcal{T}}_{ \iota  }
			}}  \quad  & \omega_1 \sum_{t=(\iota-1)\spLen + 1  }^{T} \sum_{n \in \mathcal{N}} \sum_{m \in \mathcal{M}} 2^{\frac{ e_{n,m,t} r_{n,t}^{i-1}}{\hat{\txTime}_{n,m,t} \slotLen \bandwidth}} \Phi_{n, m,t} \hat{\txTime}_{n,m,t} , \\
		\text{s.t.} \quad &  \hat{\txTime}_{n, m, t} \ge 0,  e_{n, m,t} \ge 0,\forall n, m, t, \\
		& \sum_{n \in \mathcal{N}} \hat{\txTime}_{n,m,t} \le 1, \forall m,t, \\
		& \sum_{m \in \mathcal{M}} e_{n,m,t} = 1, \forall n,t. \\
		& 2^{\frac{ e_{n,m,t} r_{n,t}^{i-1}}{\hat{\txTime}_{n,m,t} \slotLen \bandwidth}} \Phi_{n, m,t} \le P_{\max}, \forall n,m,t.
	\end{aligned}
\end{equation*}
Note that the above problem \pref{Reference Policy - BS time allocation} is convex, it can be solved efficiently. Let $\{e_{n,m,t}^{\ast}|\forall m\}$ be the optimal BS association of problem \pref{Reference Policy - BS time allocation}. The BS association for problem \pref{Reference Policy - mixed integer BS time allocation} can be determined as 
\begin{equation}
    e_{n,m,t}^{i} = \mathds{1} ( m = \argmax_{k \in \mathcal{M}} e_{n,k,t}^{\ast} ), \forall n,t.
\end{equation}
Then, by solving the problem \pref{Reference Policy - mixed integer BS time allocation} with $e_{n,m,t}=e_{n,m,t}^{i}$, $\forall n,m,t$, which becomes convex, the optimal time allocation $\txTime_{n,t}^{i}$, $\forall n,t$, can be determined. Aggregating $e_{n,m,t}^{i}$ and $\txTime_{n,t}^{i}$, the overall BS association $\mathcal{E}_{\iota }^i$ and time allocation $\mathcal{T}_{\iota }^i$ can be obtained, respectively.

\subsubsection{Optimization of throughput}
Given the optimized BS associations $ \mathcal{E}_{\iota }^i $ and time allocations $\mathcal{T}_{\iota }^i$, the throughput allocation of each vehicle in \pref{Reference Policy} can be decoupled.
The optimization of throughput allocation of the $n$-th vehicle in all the remaining time slots can be written as
\begin{equation*}
	\begin{aligned}
		\plabel{Reference Policy - Throughput Optimization}: \min_{ \varUpsilon_{n,(\iota-1)\spLen + 1} }  \quad  & \sum_{t=(\iota-1)\spLen + 1  }^{T} \mathsf{c}_{n,t} \left(
				d_{n,t}, \boldsymbol{l}_{n,t}^{(\spIdx)}, [\boldsymbol{E}_t^{i}]_n, 
				\txTime_{n,t}^{i}, r_{ n,t}
		\right), \\
		\text{s.t.} \quad & \cref{BS Association}, \cref{Uplink Time Constraint 1}, \cref{Uplink Time Constraint 2}, \cref{Peak Power Constraint},
	\end{aligned}
\end{equation*}
where $\varUpsilon_{n,(\iota-1)\spLen + 1}$ is defined in \eqref{symbol: throughput allocation}.
The optimization of problem \pref{Reference Policy - Throughput Optimization} can be conducted in two cases: the task offloading can be accomplished in the scheduling period and otherwise. Both cases are discussed below.

\textbf{Case 1}: Suppose the task offloading of the $n$-th vehicle is accomplished in the $\nthVehicleFinishTime$-th time slot and  $\nthVehicleFinishTime \leq T$, the problem P3 can be rewritten as the following problem \pref{Reference Policy - Throughput Optimization - Case 1}. 
\begin{equation*}
	\begin{aligned}
		\plabel{Reference Policy - Throughput Optimization - Case 1}: \min_{ r_{n,(\iota-1)\spLen + 1}, ..., r_{n,\nthVehicleFinishTime} } \quad & \omega_1 \sum_{t = (\iota-1)\spLen + 1  }^{\nthVehicleFinishTime} 2^{\frac{r_{n,t}}{\txTime_{n, t}^{i}\slotLen \bandwidth }} \Phi_{n,t}  \txTime_{n, t}^{i}, \\
		\text{s.t.}  \quad
		& 0 \le r_{n,t} \le \txTime_{n, t}^{i}\slotLen \bandwidth \log_2 \frac{P_{\max}}{\Phi_{n,t}}, \forall t, \\
		& \sum_{t=(\iota-1)\spLen + 1}^{\nthVehicleFinishTime} r_{n,t} = D_{n,a}, \\
	\end{aligned}
\end{equation*}

\textbf{Case 2}: If the task offloading cannot be accomplished, the optimization problem can be written as
\begin{equation*}
	\begin{aligned}
		\plabel{Reference Policy - Throughput Optimization - Case 2}: \min_{ \varUpsilon_{n,(\iota-1)\spLen + 1}}\quad & \omega_1 \sum_{t = (\iota-1)\spLen + 1  }^{T} 2^{\frac{r_{n,t}}{\txTime_{n, t}^{i}\slotLen \bandwidth}} \Phi_{n,t}  \txTime_{n, t}^{i}  \\ 
		& +  \omega_2 \left(D_n^a - \sum_{t = (\iota-1)\spLen + 1  }^{T} r_{n,t}\right), \\
		\text{s.t.}  \quad
		& 0 \le r_{n,t} \le \txTime_{n, t}^{i}\slotLen \bandwidth \log_2 \frac{P_{\max}}{\Phi_{n,t}}, \forall t, \\
		& \sum_{t = (\iota-1)\spLen + 1  }^{T} r_{n,t} \le D_{n,a}.  \\
	\end{aligned}
\end{equation*}
The solution for the problem \pref{Reference Policy - Throughput Optimization - Case 1} and \pref{Reference Policy - Throughput Optimization - Case 2} are provided in Appendix \ref{Appendix: offline pre-allocation}. 
After solving them, the optimal throughput allocation for \pref{Reference Policy - Throughput Optimization} is determined by comparing the objective values of \pref{Reference Policy - Throughput Optimization - Case 1} and \pref{Reference Policy - Throughput Optimization - Case 2}, which is denoted as $\Upsilon_{\iota }^i$.

\section{Optimization of the Online Scheduling \pref{Online Optimization Problem}}  
\label{section: solution of dynamic improvement}

In this section, an optimization algorithm for the online scheduling problem \pref{Online Optimization Problem} is proposed. It can be observed that the problem is convex when $t = T$, representing the final time slot of the scheduling period. However, for $t < T$, the problem becomes non-convex due to the approximation of the value function. The former can be solved with existing method efficiently. To address the latter, an alternating minimization method can be applied.

The variables of problem \pref{Online Optimization Problem} include the BS association $\boldsymbol{E}_t$, time allocation $\boldsymbol{\txTime}_t$ and throughput allocation $\boldsymbol{r}_t$ for $t$-th time slot, as well as the reference throughput allocation $\{\varUpsilon_{n,t+1} | n \in \mathcal{N} \}$ for the value function approximation. In the proposed alternating minimization method, $(\boldsymbol{E}_t,\boldsymbol{\txTime}_t)$ and $(\boldsymbol{r}_t,\{\varUpsilon_{n,t+1} | n \in \mathcal{N} \})$ are optimized alternatively in each iteration.
Particularly, denote the BS association,  time and throughput allocation for $t$-th time slot in the $i$-th iteration as $\boldsymbol{E}_{t}^i$, $\boldsymbol{\txTime}_{t}^i$ and $\boldsymbol{r}_{t}^{i}$, respectively, and denote the reference throughput allocation for value function approximation in the $i$-th iteration as $\{\varUpsilon_{n,t+1}^{i} | n \in \mathcal{N} \}$.
In each iteration (say the $i$-th iteration), we jointly optimize the BS association $\boldsymbol{E}_t^i$ and time allocation $\boldsymbol{\txTime}_t^i$ with the throughput allocation $\boldsymbol{r}_t^{i-1}$ and $\{\varUpsilon_{n,t+1}^{i-1} | n \in \mathcal{N} \}$. 
Next, for each vehicle (say the $n$-th one, $n \in \mathcal{N}$), we optimize its throughput allocation $r_{n,t}^{i}$ and the reference throughput allocation $\varUpsilon_{n,t+1}^{i}$, given the previously optimized BS association $\boldsymbol{E}_{t}^i$ and time allocation $\boldsymbol{\txTime}_{t}^i$. Both two steps are elaborated below.

\subsubsection{Optimization of the BS association and time allocation} 
Note that the optimal solution of BS association should search all the possible combinations, leading to prohibitive complexity. In order to search a sub-optimal association with low complexity, we first relax the problem by allowing each vehicle to associate with all the BSs in one time slot (the same relaxation method as introduced in Section~\ref{section: offline pre-allocation}). 

Particularly, let $e_{n,m,t} \in [0, 1]$ be the data fraction transmitted from the $n$-th vehicle to the $m$-th BS in the $t$-th time slot, and $\hat{\txTime}_{n,m,t} = [\hat{\boldsymbol{\txTime}}_{\iota }]_{n,m}$ be the uplink time of the $n$-th vehicle to the $m$-th BS in the $t$-th time slot. Hence, given the throughput allocation, the optimization of the BS association and time allocation with relaxation can be written as
\begin{equation*}
    \begin{aligned}
    \plabel{Online Optimization - BS time allocation}: \min_{\boldsymbol{E}_{t}, \hat{\boldsymbol{\txTime}}_{t}} \quad
                           & \omega_1 \sum_{n \in \mathcal{N}} \sum_{m \in \mathcal{M}} 2^{\frac{e_{n, m,t} r_{n,t}^{i-1}}{\hat{\txTime}_{n,m,t}\slotLen \bandwidth}} \Phi_{n,t} \hat{\txTime}_{n,m,t}, \\
    \text{s.t.}  \quad &  \hat{\txTime}_{n, m, t} \ge 0, e_{n, m,t} \ge 0,\forall n, m, \\
            & \sum_{n \in \mathcal{N}} \hat{\txTime}_{n,m,t} \le 1, \forall m, \\
            & \sum_{m \in \mathcal{M}} e_{n,m,t} = 1, \forall n, \\
            & 2^{\frac{ e_{n,m,t} r_{n,t}^{i-1}}{\hat{\txTime}_{n,m,t} \slotLen \bandwidth}} \Phi_{n, m,t} \le P_{\max}, \forall n,m.
    \end{aligned}
\end{equation*}
Note that the above problem \pref{Online Optimization - BS time allocation} is convex, which can be solved efficiently with the existing method. Let $\{e_{n,m,t}^{\ast}|\forall m,n\}$ be the optimal solution of $\{e_{n,m,t}|\forall m,n\}$. The BS association can be determined as 
\begin{equation}
    e_{n,m,t}^{i} = \mathds{1} ( m = \argmax_{k \in \mathcal{M}} e_{n,k,t}^\ast ).
\end{equation}
Then, the optimal time allocation with the above BS association can be obtained by solving the convex problem \pref{Online Optimization - BS time allocation} with $e_{n,m,t}=e_{n,m,t}^{i}$, $\forall n,m$.
Let $\{ \txTime_{n,m,t}^\ast | \forall m \}$ be the optimal time allocation of $\{ \hat{\txTime}_{n,m,t} | \forall m \}$, $\forall n$,  the optimal time allocation $\txTime_{n,t}^{i}$ is determined from 
\begin{equation}
	\txTime_{n,t}^{i} = \sum_{m \in \mathcal{M}} \txTime_{n,m,t}^{\ast} \mathds{1} ( m = \argmax_{k \in \mathcal{M}} e_{n,k,t}^\ast ).
\end{equation} 

\begin{figure}[t]
    \centering

    \begin{subfigure}{0.45\linewidth}
    \centering
	\includegraphics[width=\linewidth]{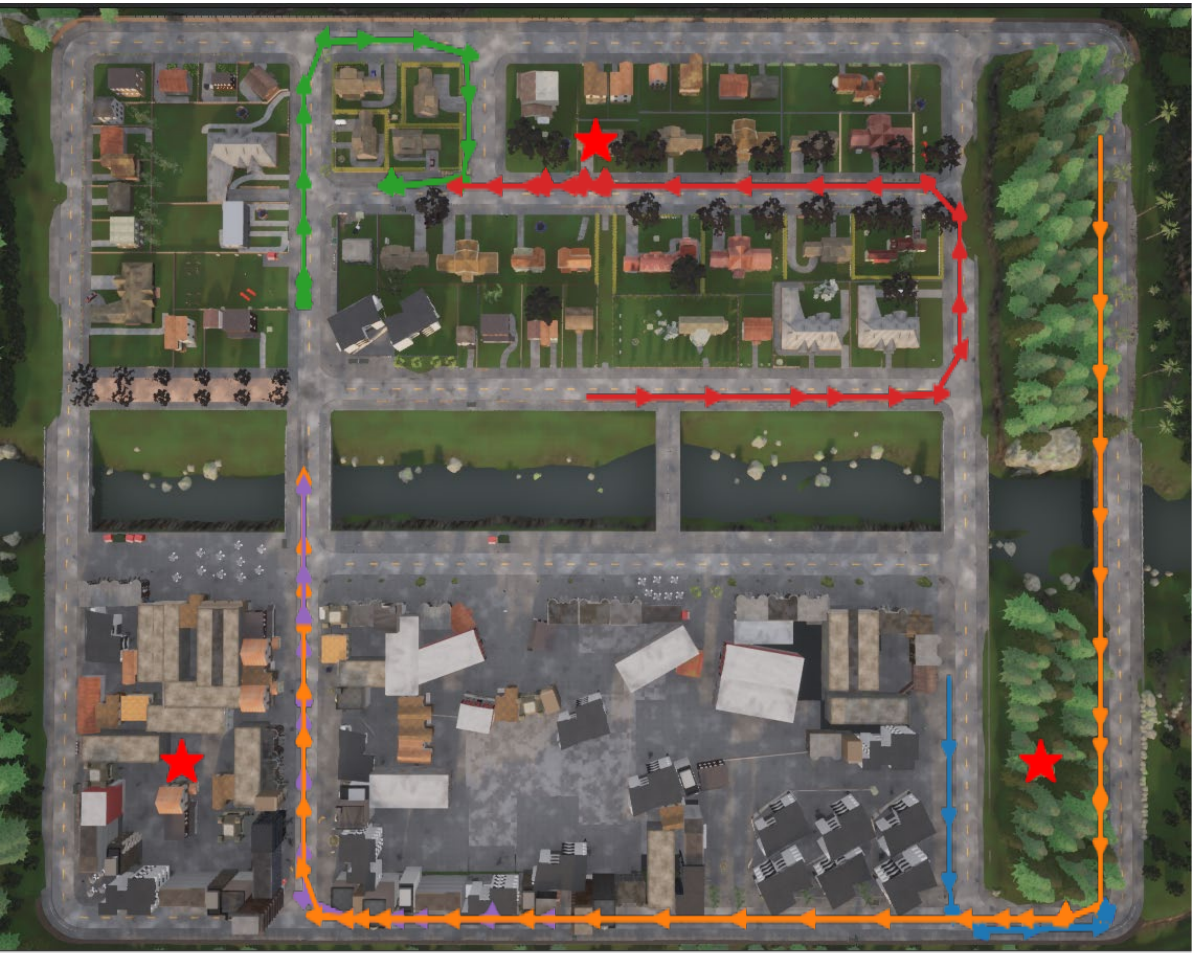}
	\end{subfigure}
    \begin{subfigure}{0.45\linewidth}
    \centering
	\includegraphics[width=\linewidth]{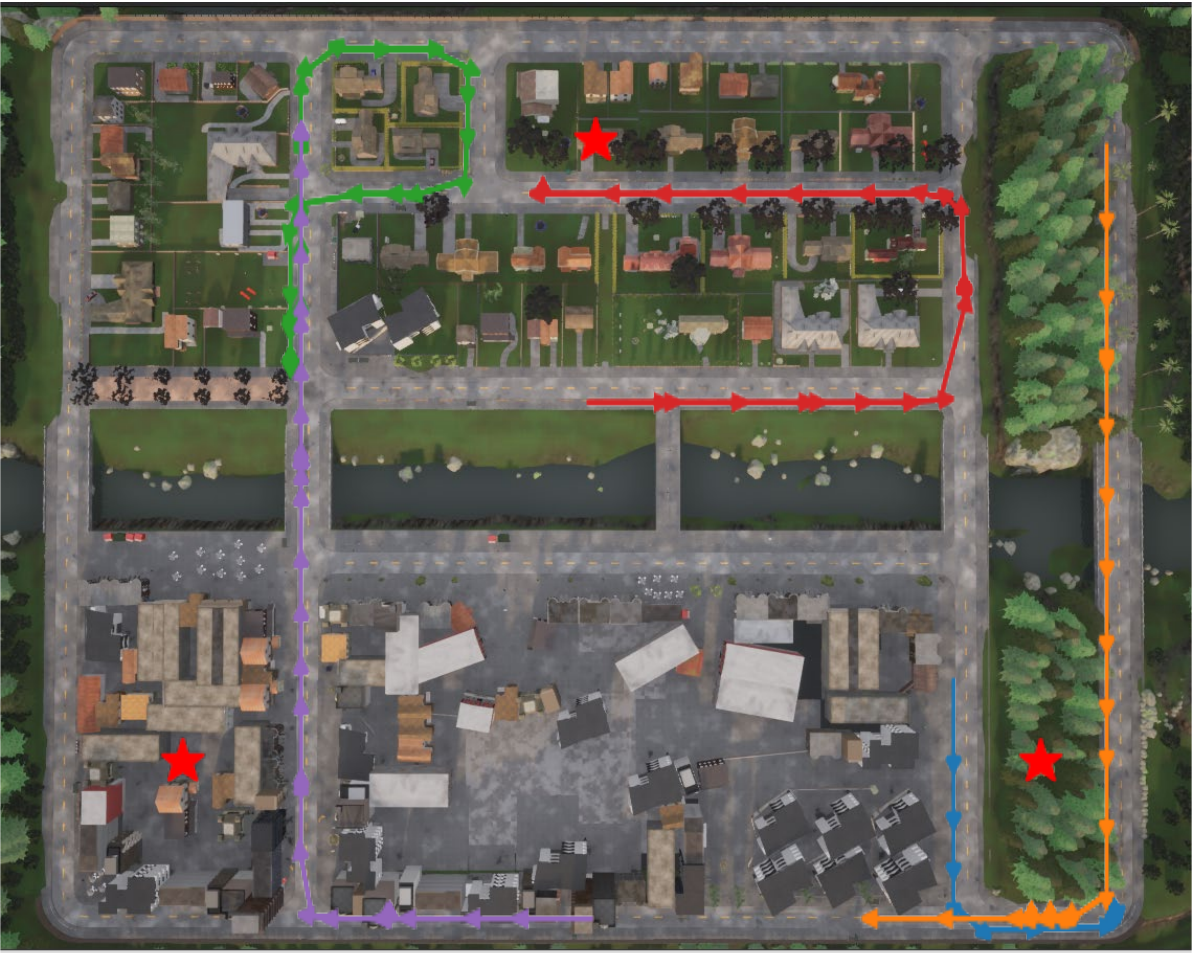}
	\end{subfigure}
	
	\add{
	\vspace{0.15cm}
    \begin{minipage}{\linewidth}
    \includegraphics[width=\textwidth]{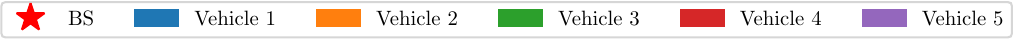}
    \end{minipage}
    }

	\caption{Illustration of simulation scenario, where two samples of random trajectories are provided. The stars and lines denote BSs' locations and vehicles' trajectories, respectively. Due to random velocities, the lengths of trajectories of each vehicle could be different in the two samples.}
	\label{figure: illustration of simulator}
\end{figure}

\subsubsection{Optimization of the throughput}
Given the BS association and time allocation, the throughput optimization of $N$ vehicles can be decoupled, where the sub-problem for the $n$-th vehicle can be written as
\begin{equation*}
    \begin{aligned}
    \plabel{Online Optimization - throughput allocation}: \min_{ \substack{r_{n,t}, \\ \varUpsilon_{n,t+1} }} \quad & 
    \omega_1 2^{\frac{r_{n,t}}{\txTime_{n,t}^i\slotLen \bandwidth}}\Phi_{n,t}\txTime_{n,t}^i \\
    & + f_{n, t+1}^1(d_{n,t} - r_{n,t}, \varUpsilon_{n,t+1}) \\ 
    & + \omega_1 \sum_{ \boldsymbol{l}_{n,t+1} } \mathds{P}\left[ \boldsymbol{l}_{n,t+1}|  \boldsymbol{l}_{n,t} \right] \\ 
    & \times f_{n, t+1}^2(d_{n,t} - r_{n,t}, \boldsymbol{l}_{n,t+1}, \varUpsilon_{n,t+1}), \\
    \text{s.t.}  \quad
    & 0 \le r_{n,t} \le \txTime_{n, t}^i \slotLen \bandwidth \log_2 \frac{P_{\max}}{\Phi_{n,t}}. \\ 
    \end{aligned}
\end{equation*}
The above problem \pref{Online Optimization - throughput allocation} can be solved similarly to the  problem \pref{Reference Policy - Throughput Optimization}.

\section{Complexity and Performance Analysis}
\label{section: complexity and performance analysis}
The complexity of the proposed algorithm is analyzed as follows. The overall scheduling algorithm consists of two components: super-slot-scale per-allocation and slot-scale online scheduling. The former involves iterative optimization of problems \pref{Reference Policy - mixed integer BS time allocation} and \pref{Reference Policy - Throughput Optimization}. The computational complexity of the problem in \pref{Reference Policy - mixed integer BS time allocation} is $O((NT)^2/\epsilon^2)$ to achieve an $\epsilon$-accurate solution. This results from the epigraph form of \pref{Reference Policy - mixed integer BS time allocation}, where the maximum feasible set has an upper bound of $P_{\max}NT$, and the objective is $1$-Lipschitz continuous. Using the projected gradient method \cite[Thm. 8.13]{beckFirstOrderMethodsOptimization2017}, obtaining an $\epsilon$-accurate solution requires $O((NT)^2/\epsilon^2)$ iterations.
The computational complexity of problem \pref{Reference Policy - Throughput Optimization} is $O(NT^2\log_2 T)$. Consequently, the total complexity per super-slot is 
$$O\left(I_{\max}\left((NT)^2 /\epsilon^2 + NT^2\log_2 T\right)\right),$$ 
where $I_{\max}$ is the maximum number of iterations.

The slot-scale online scheduling consists of iterative optimization of problems \pref{Online Optimization - BS time allocation} and \pref{Online Optimization - throughput allocation}. Their complexities are $O(N^2/\epsilon^2)$ and $O(NT^2 \log_2 T)$ respectively. Thus, the overall complexity of online scheduling is 
$$O\left( I_{\max} \left( N^2/\epsilon^2 + NT^2\log_2 T \right)\right),$$ 
where $I_{\max}$ is the maximum iteration number.

Note that both pre-allocation and online scheduling are designed to suppress the average total cost of the reference scheduling, the performance of the proposed scheduling framework can actually be bounded by the reference scheduling, which is summarized below.
\begin{Lemma}[Non-trivial Performance Bound]
Let $\widetilde{\Omega}$ be the proposed scheduling policy. With sufficiently large peak power constraint, the average total cost of the proposed scheduling framework since the $\iota$-th super slot ($\forall \iota$) is bounded above by that of the corresponding reference scheduling. Specifically,
\begin{equation}
	 \begin{aligned}
	\mathsf{C}_t( \state_t, \widetilde{\Omega}) \leq\  &c_t(\state_t, \action_{t,\iota}^{ref}) +  \sum_{\state_{t+1}} \mathds{P}\left[ \state_{t+1} | \state_{t}, \action_t \right] \\
	&\times \!\sum_{n \in \mathcal{N}} \widetilde{V}_{n,t+1}\left(
        d_{n,t+1},\boldsymbol{l}_{n,t+1}, \varUpsilon_{n,t+1}^{(\iota)}\right), \forall \iota, t,
	\end{aligned}
\end{equation}
where 
$$\varUpsilon_{n,t+1}^{(\iota)} \triangleq \left(r_{n,t+1}^{(\iota)}, r_{n,t+1}^{(\iota)}, ..., r_{n,T}^{(\iota)}\right),$$ 
denotes the aggregation of throughput allocation in the reference scheduling, and $\Tilde{V}_{n,t+1}(\cdot)$ is defined in {\rm Lemma \ref{lemma: Asymptotically Achievable Average Cost}}.
\end{Lemma}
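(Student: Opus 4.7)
The plan is to use backward induction on $t$ within each super slot, combining three ingredients: (i) the optimality of the online action $\action_t^\ast$ from problem \pref{Online Optimization Problem}, (ii) the explicit closed form of the surrogate value function $\widetilde{V}_{n,t+1}$ given in Lemma \ref{lemma: Asymptotically Achievable Average Cost}, and (iii) the super-slot selection rule \eqref{equation: reference scheduling update}. Fix a super slot index $\iota$ and a slot $t$ within it. The key observation is that the pair $(\action_{t,\iota}^{ref},\{\varUpsilon_{n,t+1}^{(\iota)}\}_n)$ is a feasible candidate for the joint minimization that defines $\action_t^\ast$ together with the embedded $\min$ in \eqref{eq: Q-factor}: the BS association and time allocation constraints \cref{BS Association}--\cref{Uplink Time Constraint 2} are inherited from the construction of $\mathcal{R}_\iota$, while \cref{Peak Power Constraint} holds under the ``sufficiently large peak power'' hypothesis. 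Plugging this candidate into the outer and inner minimizations yields
\begin{equation*}
\mathsf{c}_t(\state_t, \action_t^\ast) + \widetilde{Q}_t(\state_t,\action_t^\ast;\mathcal{R}_\iota) \le \mathsf{c}_t(\state_t,\action_{t,\iota}^{ref}) + \sum_{\state_{t+1}}\mathds{P}[\state_{t+1}|\state_t,\action_{t,\iota}^{ref}]\sum_{n}\widetilde{V}_{n,t+1}(d_{n,t+1},\boldsymbol{l}_{n,t+1},\varUpsilon_{n,t+1}^{(\iota)}).
\end{equation*}

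Next I would run the backward induction. The base case is $t=T$: by inspection of the rows of Table \ref{table: case function}, $\widetilde{V}_{n,T+1}$ collapses to the terminal leftover penalty $\omega_2 d_{n,T+1}$, which is exactly the terminal term already embedded in $\mathsf{c}_{n,T}$, so the claimed bound is immediate. For the inductive step with $t<\iota\spLen$, I would unroll $\mathsf{C}_t(\state_t,\widetilde{\Omega}) = \mathsf{c}_t(\state_t,\action_t^\ast) + \mathds{E}[\mathsf{C}_{t+1}(\state_{t+1},\widetilde{\Omega})]$, apply the induction hypothesis to upper bound the second term by the surrogate evaluated at the reference tail $\varUpsilon_{n,t+2}^{(\iota)}$, and then recognize the resulting inner expression as $\widetilde{Q}_t(\state_t,\action_t^\ast;\mathcal{R}_\iota)$ by the additive $f_{n,t+1}^1+f_{n,t+1}^2$ decomposition provided by Lemma \ref{lemma: Asymptotically Achievable Average Cost}. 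Combining with the displayed inequality above closes the induction within a single super slot.

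The main obstacle is propagating the bound across a super-slot boundary at $t=\iota\spLen$, where the policy starts using $\mathcal{R}_{\iota+1}$ rather than $\mathcal{R}_\iota$. To handle this, I would chain two facts. First, $\mathcal{R}_\iota^{post}$ is constructed immediately after Lemma \ref{lemma: Asymptotically Achievable Average Cost} as the inner $\argmin$ over $\varUpsilon_{n,\iota\spLen+1}$ of the surrogate evaluated at the realized post-boundary state, so it furnishes a valid upper bound on what the surrogate associated with the preceding super slot predicted at slot $\iota\spLen+1$. Second, the update rule \eqref{equation: reference scheduling update} guarantees $\mathsf{C}_{\iota\spLen+1}(\state,\mathcal{R}_{\iota+1})\le \mathsf{C}_{\iota\spLen+1}(\state,\mathcal{R}_\iota^{post})$, so the induction bound never deteriorates across a boundary. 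A secondary subtlety is that the applicability of the closed form in Lemma \ref{lemma: Asymptotically Achievable Average Cost} throughout the entire sample path requires $P_{\max}$ to be large enough for every reference throughput encountered to remain power feasible; this is exactly the hypothesis of the present lemma. Once these two items are pinned down, chaining the per-slot inequality across slots within a super slot and across successive super slots yields the claimed uniform upper bound for every $(t,\iota)$.
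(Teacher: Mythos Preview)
Your approach is correct and essentially matches the paper's, though the paper's proof is merely a one-sentence sketch stating that the result ``follows directly from the fact that the online scheduling in each time slot successively suppresses the average total cost of the remaining time slots based on the reference scheduling $\mathcal{R}_\iota$.'' Your backward induction, the feasibility argument for $(\action_{t,\iota}^{ref},\{\varUpsilon_{n,t+1}^{(\iota)}\}_n)$ in \pref{Online Optimization Problem}, and the super-slot boundary handling via $\mathcal{R}_\iota^{post}$ and \eqref{equation: reference scheduling update} are precisely the details one would supply to make that sketch rigorous.
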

\begin{proof}
This result follows directly from the fact that the online scheduling in each time slot successively suppresses the average total cost of the remaining time slots based on the reference scheduling $\mathcal{R}_\iota$.
\end{proof}

\begin{Remark}[Analysable Approximate MDP]
Most of the approximate MDP methods are hard to analyze. In this paper, we incorporate physical meaning to the approximation of the optimal value function, such that an upper bound on the average total cost can be obtained in Lemma 2. The intuitive explanation is given below:
\begin{itemize}
    \item {Resource allocation with physical meaning:} we assume each vehicle moves along a deterministic trajectory, i.e., average trajectory, in the future, such that the resource allocation for the future time slots can be optimized via deterministic optimization in advance. Although it is obviously not optimal, it is a reasonable allocation on average. As a remark, note that such average trajectories are updated in each super slot, keeping them up to date. 
    \item {Value function with physical meaning:} given the above deterministic resource allocation, we derive the analytical expression of average future cost by considering the statistics of random trajectories, and use this expression as the approximation of optimal value function. Clearly, such approximation represents an achievable average future cost. Therefore, the online scheduling with such approximated value function performs no worse than the value function itself.
\end{itemize}
\end{Remark}

\begin{Remark}[Extremely Unpredictable Trajectories]
Generally, the performance gain of the proposed algorithm will decay when the mobility is extremely unpredictable, due to the inherent property of MDP. It can be observed from the Bellman's equation that in each time slot, the MDP-based scheduler jointly optimizes (minimizes) the system cost of the current time slot plus the average system cost in the future. The latter is represented by the value function in the Bellman's equation. Hence, one feature of MDP is that the scheduler not only considers the action impact on the current time slot, but also its impact on the future. If the mobility becomes extremely unpredictable, the above feature of MDP may become less important, leading to less performance gain. This phenomenon is captured analytically by the upper bound derived in Lemma 2. Particularly, it can be proved that the upper bound increases monotonically with a factor $\mathds{E} \left[ \Vert \boldsymbol{l}_{n,t+\kappa} - \boldsymbol{l}_m \Vert_2^{\gamma} \mid \boldsymbol{l}_{n,t} \right]$, which increases with mobility uncertainty.
\end{Remark}

\section{Simulation and Discussion}
\label{section: Performance Evaluation}

\begin{figure*}[t]
    \centering
    \begin{minipage}{\textwidth}
    \centering
    \includegraphics[width=0.9\textwidth]{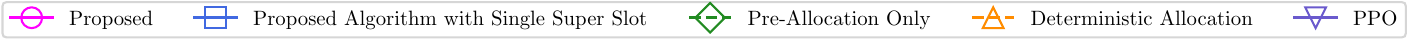}
    \end{minipage}
    
    \vspace{0.15cm}
    
    \begin{subfigure}{0.24\linewidth}
        \centering
        \includegraphics[width=\linewidth]{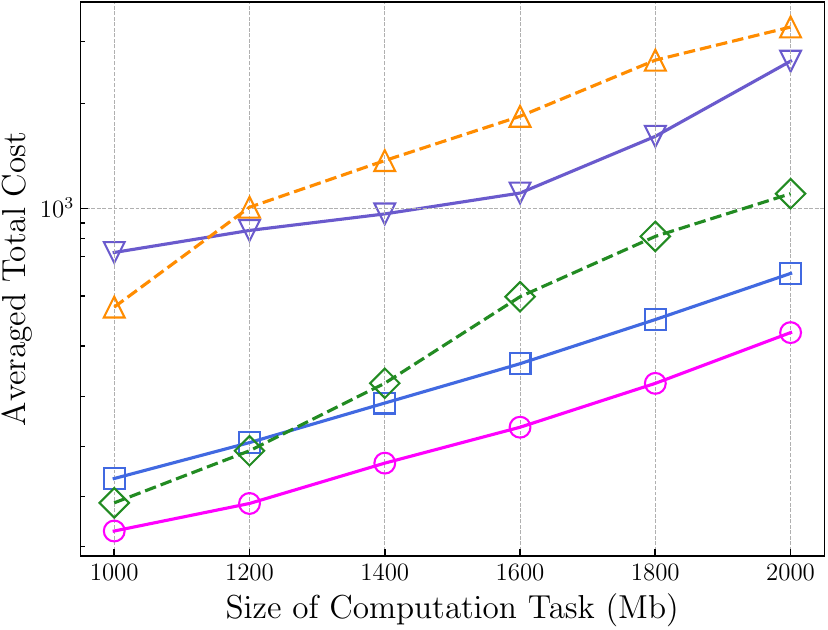}
        \caption{}
        \label{figure: cost comparison under different data arrival - total cost}
    \end{subfigure}
    \begin{subfigure}{0.24\linewidth}
        \centering
        \includegraphics[width=\linewidth]{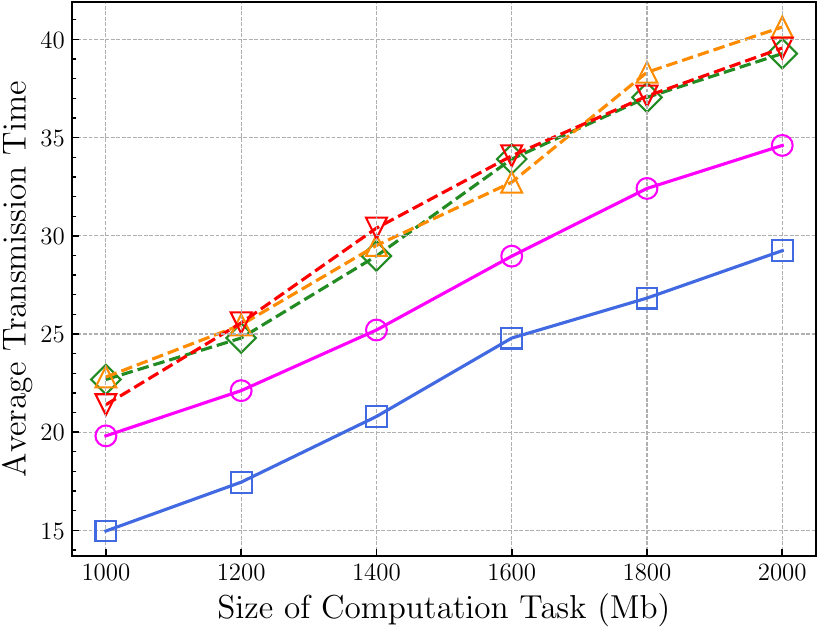}
        \caption{}
        \label{figure: cost comparison under different data arrival - time cost}
    \end{subfigure}
    \begin{subfigure}{0.24\linewidth}
        \centering
        \includegraphics[width=\linewidth]{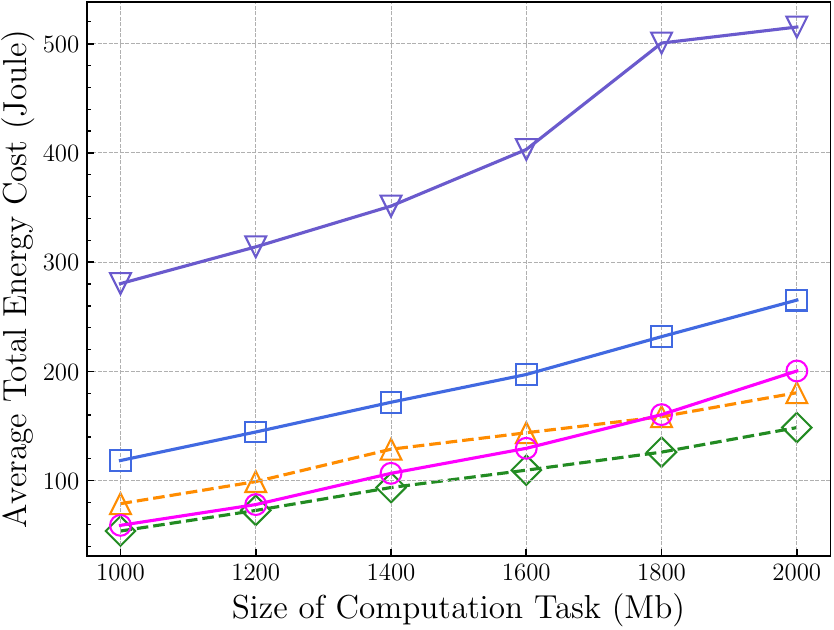}
        \caption{}
        \label{figure: cost comparison under different data arrival - energy cost}
    \end{subfigure}
    \begin{subfigure}{0.24\linewidth}
        \centering
        \includegraphics[width=\linewidth]{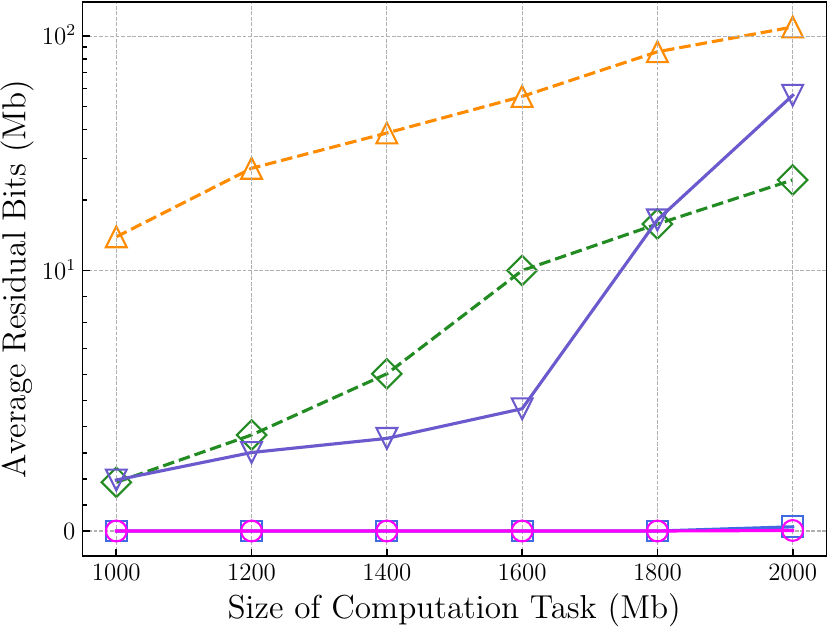}
        \caption{}
        \label{figure: cost comparison under different data arrival - data cost}
    \end{subfigure}
    \caption{The comparison of (a) average total cost, (b) average transmission time, (c) average energy cost and (d) average residual information bits per vehicle versus different sizes of computation task.}
    \label{figure: cost comparison under different data arrival}
\end{figure*}

In this section, the performance of the proposed solution framework is evaluated via numerical simulation.
The vehicle's trajectories are randomly generated from the high-fidelity CARLA ({Car} {L}earning to {A}ct) simulator \cite{dosovitskiy2017carla} in the road map of \textit{Town01}, which has physical dimensions $410.68\ \mathrm{m} \times 344.26\ \mathrm{m}$. 
As shown in \figurename~\ref{figure: illustration of simulator}, the simulation involves $N = 5$ vehicles and $M = 3$ BSs. 
The scheduling period spans $\periodLen = 50$ seconds, divided into $50$ time slots. Hence, each time slot is with a duration of $\slotLen = 1$ second.
Each BS operates on a bandwidth of $\bandwidth = 20$ MHz. The path loss exponent is $\gamma = 4$, and the total power of noise and interference is $-30$ dBw.
Information bits arrive at each vehicle at the beginning of the scheduling period. The peak power constraint is $P_{\max} = 5$ W. The transmission signal experiences Rayleigh fast fading, where $h\sim\mathcal{CN}(0,6)$. The weights $\omega_1$ and $\omega_2$ are $2$ and $5$, respectively, the number of information bits per task is $1600$M, each super slot consists of $10$ time slots. These parameters are fixed in all the simulations unless otherwise stated.

In the simulation, we evaluate the performance of the proposed framework and the following baseline algorithms. Each algorithm is simulated for 500 trails, such that the average total cost can be obtained. 

\begin{itemize}
  \item \textbf{Deterministic Allocation (DA)}: The reference scheduling of the first super slot $\mathcal{R}_1$ is directly utilized for offloading scheduling.
  \item \textbf{Pre-Allocation Only (PAO)}: The reference scheduling is updated and directly utilized for offloading scheduling in each super slot.
  \item \textbf{Proposed Algorithm with Single Super Slot (PAS3)}: In the proposed algorithm, the super slot spans the whole scheduling period. Thus, the reference scheduling is calculated only at the beginning of the scheduling period, similar to our prior work \cite{li2025dynamic}.
  \item \textbf{Proximal Policy Optimization (PPO)}: Conventional reinforcement learning algorithm for MDP as elaborated in \cite{schulman2017proximal}. 
Particularly, the PPO algorithm employs multi-layer perceptron networks for both actor and critic components. Each network features a single hidden layer containing 256 units. The training procedure comprises up to 156,250 episodes. Within each episode, the agent completes 64 discrete cycles. Every cycle consists of 50 consecutive time slots of agent-environment interaction, after which the environment resets. The discount factor is set to 0.
The learning rates are set to $3\times 10^{-4}$ for the actor network and $1\times 10^{-3}$ for the critic network.
\end{itemize}

\begin{figure*}[t]
    \centering
    \begin{minipage}{\textwidth}
    \centering
    \includegraphics[width=0.9\textwidth]{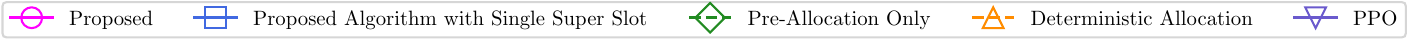}
    \end{minipage}
    
    \vspace{0.15cm}
    \begin{subfigure}{0.24\linewidth}
        \centering
        \includegraphics[width=\linewidth]{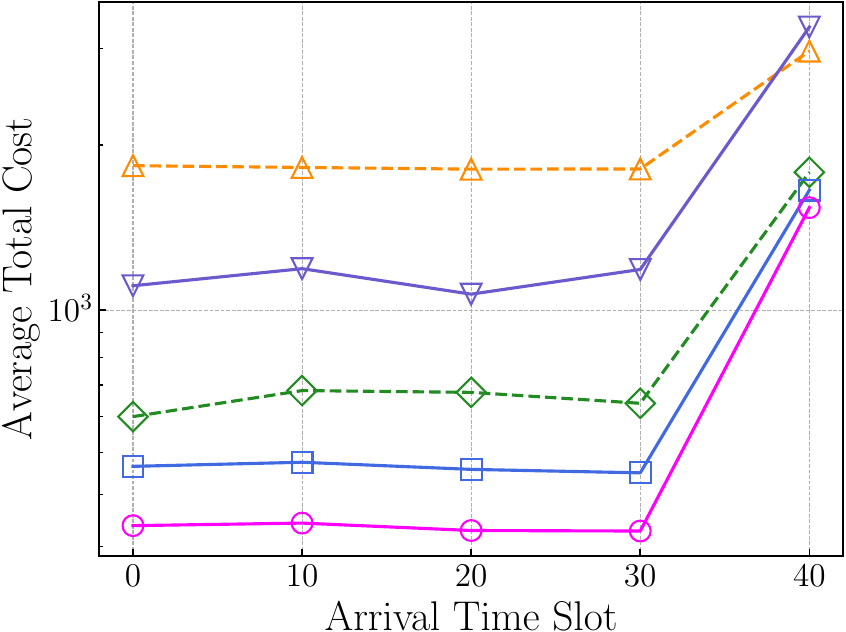}
        \caption{}
        \label{figure: cost comparison under different arrival time slot - total cost}
    \end{subfigure}
    \begin{subfigure}{0.24\linewidth}
        \centering
        \includegraphics[width=\linewidth]{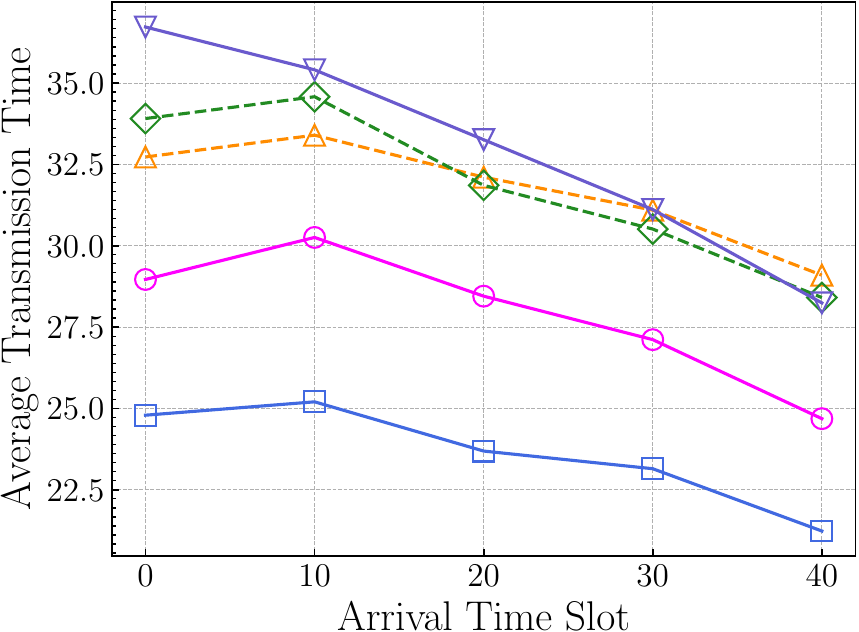}
        \caption{}
        \label{figure: cost comparison under different arrival time slot - time cost}
    \end{subfigure}
    \begin{subfigure}{0.24\linewidth}
        \centering
        \includegraphics[width=\linewidth]{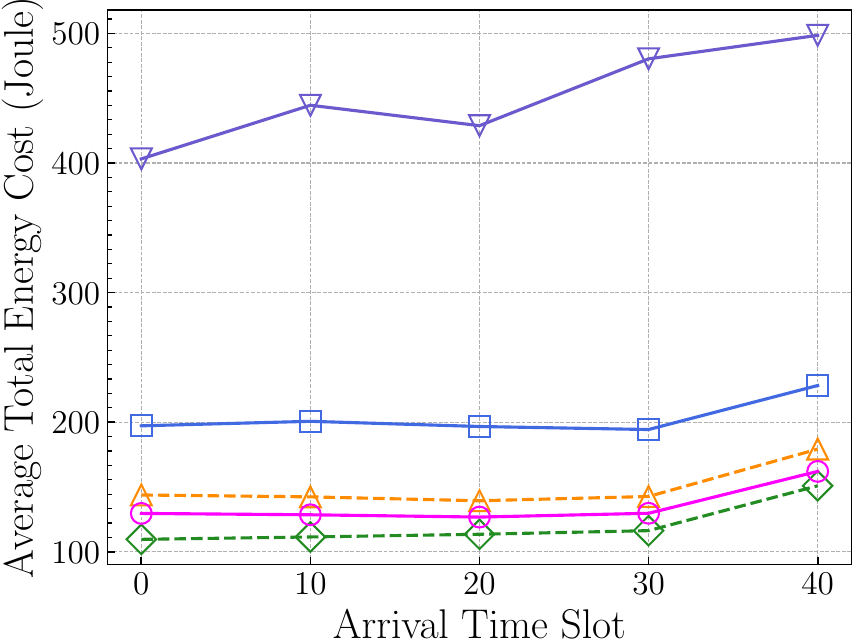}
        \caption{}
        \label{figure: cost comparison under different arrival time slot - energy cost}
    \end{subfigure}
    \begin{subfigure}{0.24\linewidth}
        \centering
        \includegraphics[width=\linewidth]{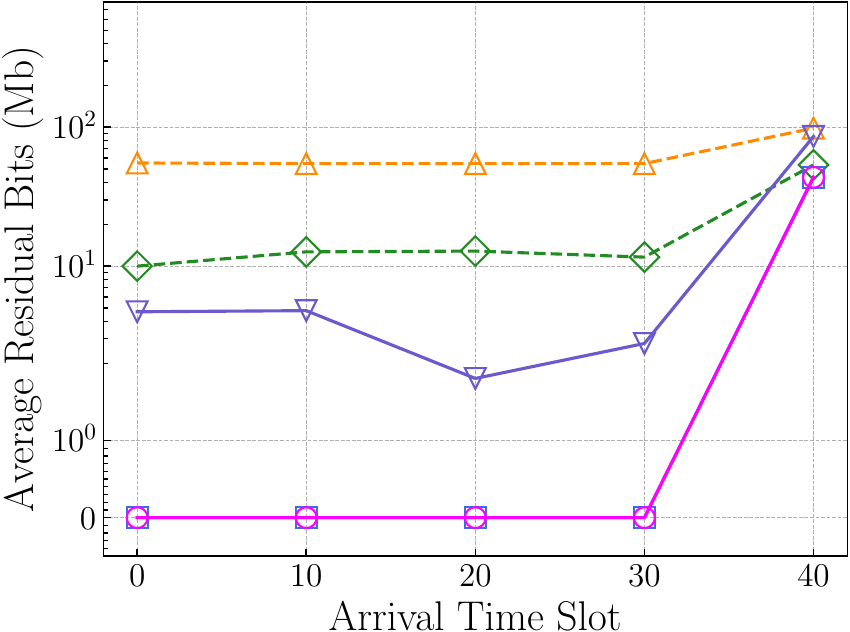}
        \caption{}
        \label{figure: cost comparison under different arrival time slot - data cost}
    \end{subfigure}
    \caption{The comparison of (a) average total cost, (b) average transmission time, (c) average energy cost and (d) average residual information bits per vehicle versus different task arrival time $T_{1,a}$.}
    \label{figure: cost comparison under different arrival time slot}
\end{figure*}

\begin{figure}[!t]
    \centering
    \vspace{1em}
    \begin{subfigure}{0.49\linewidth}
        \centering
        \includegraphics[width=\linewidth]{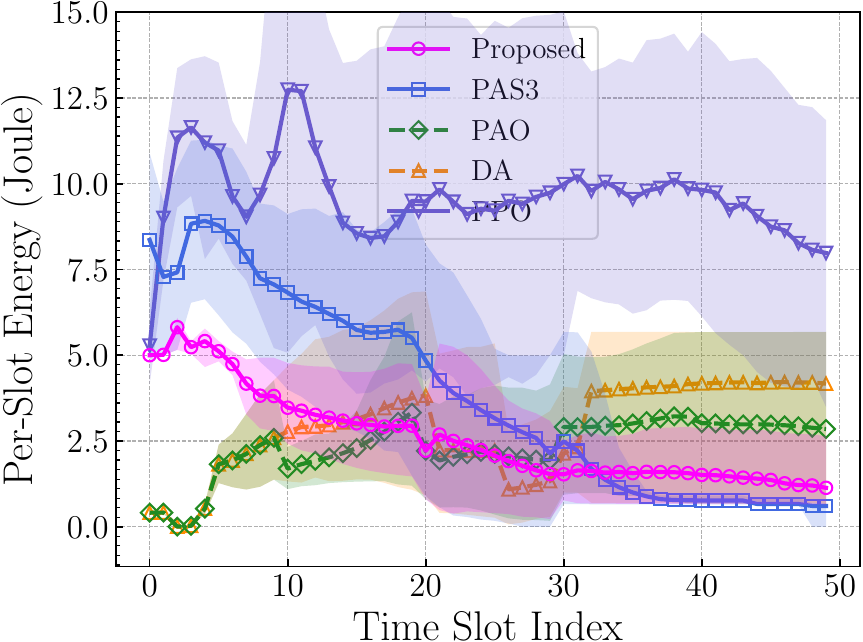}
        \caption{Task arrives in the $30$-th slot}
    \end{subfigure}
    \begin{subfigure}{0.49\linewidth}
        \centering
        \includegraphics[width=\linewidth]{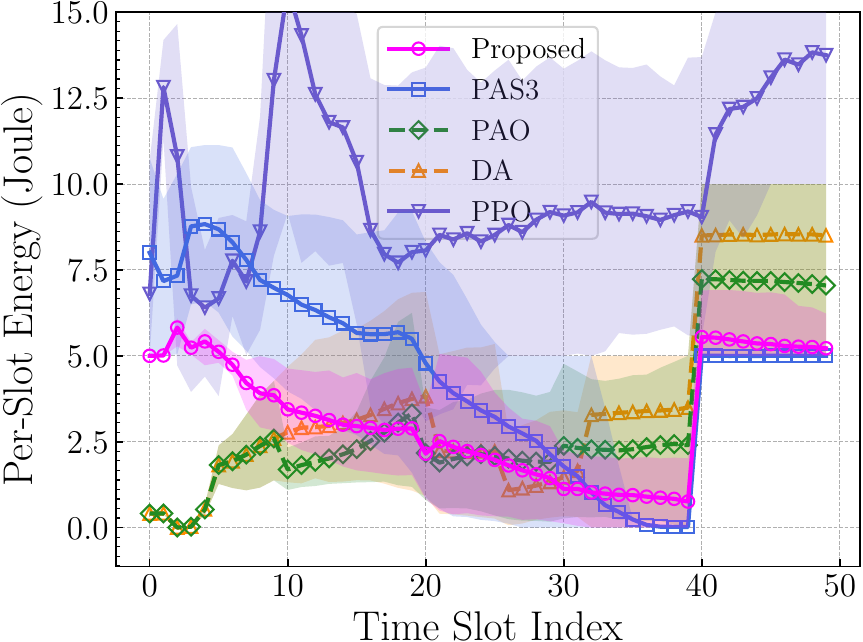}
        \caption{Task arrives in the $40$-th slot}
    \end{subfigure}
    \caption{Illustration of per-slot energy consumption.}
    \label{fig: per slot power dynamic arrival}
\end{figure}

\subsection{Performance versus Task Size}
\label{sec: analysis on the varying data arrival}
The average total costs of the proposed scheduling algorithm and the baselines are shown in \figurename~\ref{figure: cost comparison under different data arrival - total cost} with different sizes of computation task (ranging from $1000$Mb to $2000$Mb). Additional comparisons on average transmission time slots per vehicle, average total energy cost, and average residual information bits per car at the end of scheduling period are demonstrated in \figurename~\ref{figure: cost comparison under different data arrival - time cost}, \figurename~\ref{figure: cost comparison under different data arrival - energy cost}, and \figurename~\ref{figure: cost comparison under different data arrival - data cost}, respectively.

It can be observed from \figurename~\ref{figure: cost comparison under different data arrival - total cost} that the proposed algorithm outperforms all the baselines. The performance gain of the proposed algorithm over the PAS3 scheme demonstrates the necessity of periodic update of reference scheduling. The gains over the PAO and DA schemes demonstrate the benefit of dynamic programming within each super slot. The gain over the PPO scheme demonstrates the benefit of the proposed approximate MDP solution framework. 

It can be observed from \figurename~\ref{figure: cost comparison under different data arrival - time cost}, \figurename~\ref{figure: cost comparison under different data arrival - energy cost}, and \figurename~\ref{figure: cost comparison under different data arrival - data cost} that the proposed scheme achieves the best performance in the residual information bits, and good performance in the other two metrics. For example, the PAS3 scheme uses less time slots in transmission, but consumes more total energy, leading to larger average total cost.

Finally, it can be observed that the proposed algorithm can accomplish the delivery of computation tasks all the time, even when its size increases from $1000$Mb to $2000$Mb. The prices to pay is more transmission time slots and total energy consumption.

\begin{figure*}[!t]
    \centering
    \begin{subfigure}{0.23\linewidth}
        \centering
        \includegraphics[width=\linewidth]{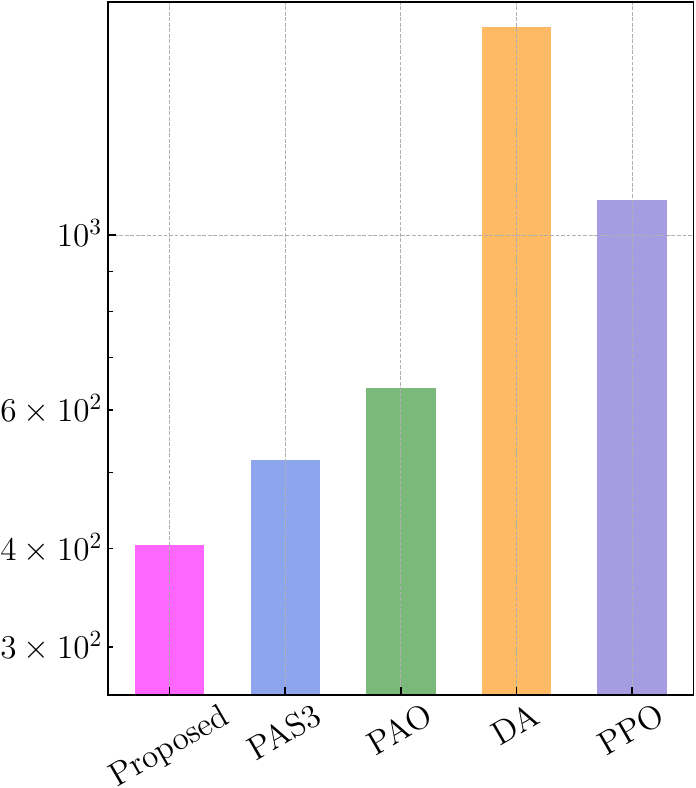}
        \caption{5 vehicles}
        \label{fig: average cost with increased vehicle number -- total cost}
    \end{subfigure}
    \begin{subfigure}{0.23\linewidth}
        \centering
        \includegraphics[width=\linewidth]{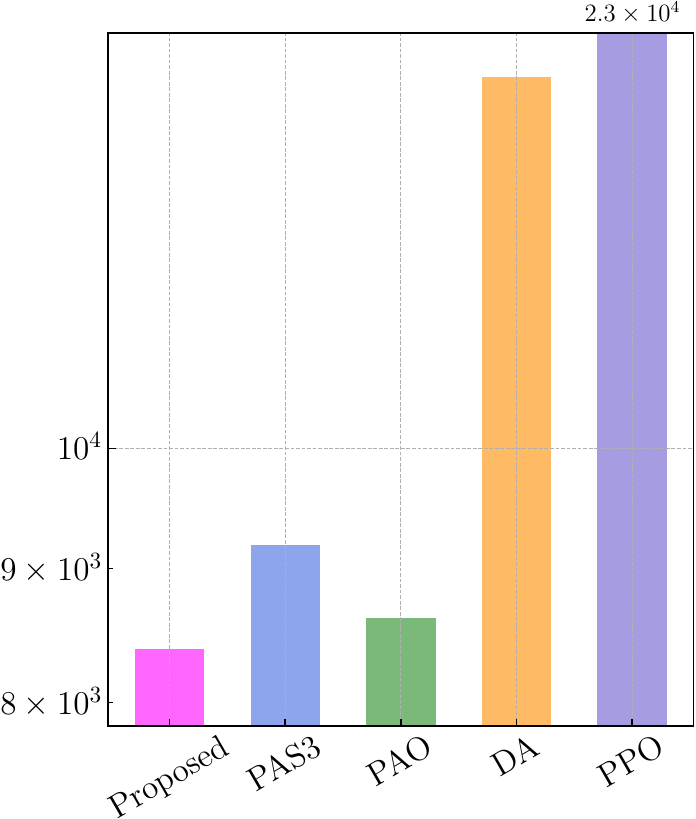}
        \caption{10 vehicles}
    \end{subfigure}
    \begin{subfigure}{0.236\linewidth}
        \centering
        \includegraphics[width=\linewidth]{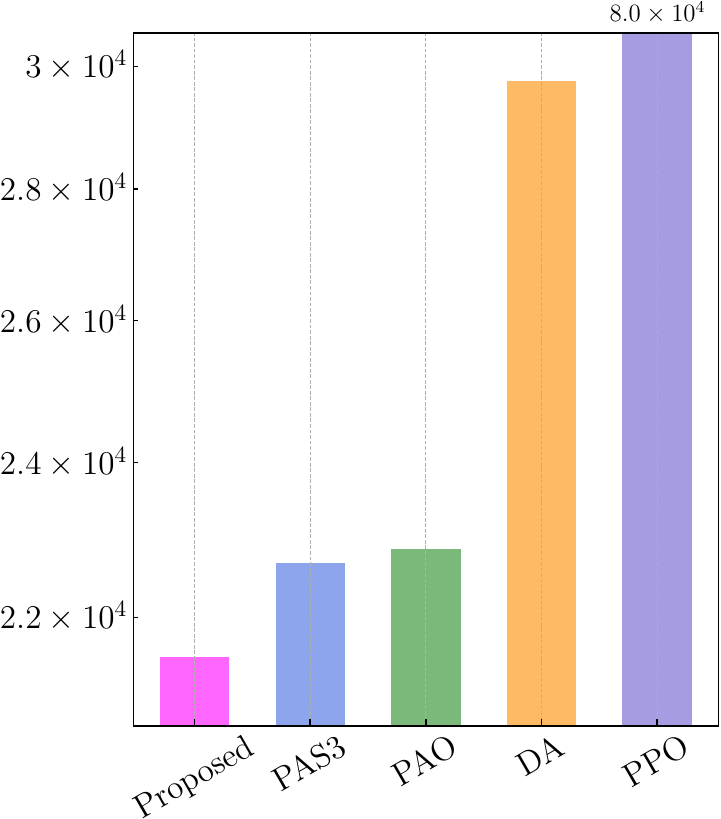}
        \caption{15 vehicles}
    \end{subfigure} 
    \vspace{1em}
    \begin{subfigure}{0.236\linewidth}
        \centering
        \includegraphics[width=\linewidth]{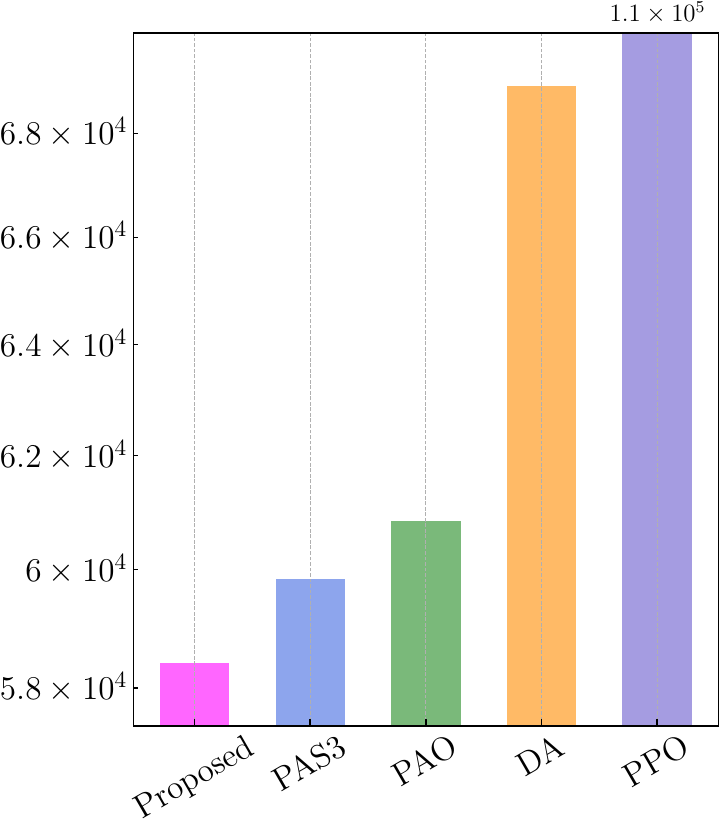}
        \caption{20 vehicles}
    \end{subfigure}
    \caption{The average total cost versus vehicle number.}
    \label{fig: average cost with increased vehicle number}
\end{figure*}
\begin{figure}[!t]
	\centering
	\includegraphics[width=0.3\textwidth]{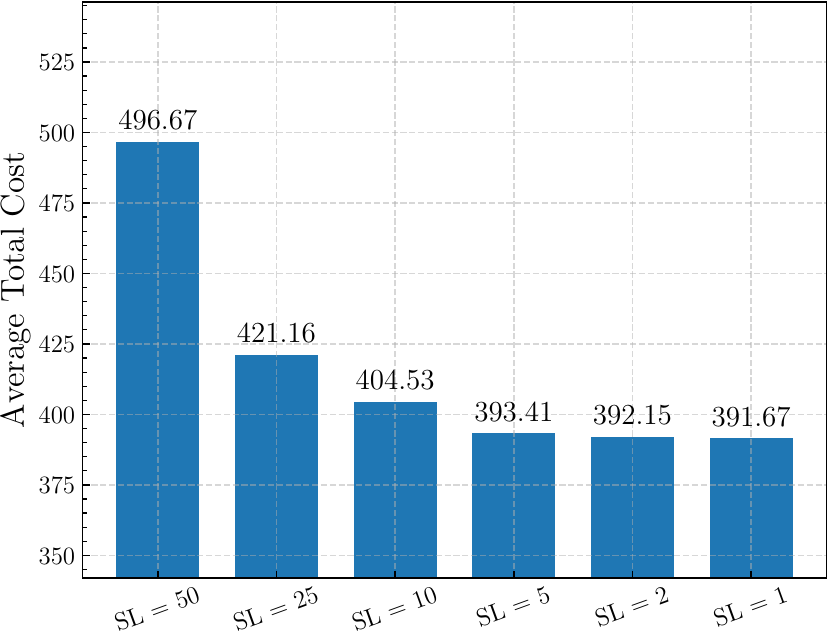}
	\caption{The average total cost versus super slot length.}
	\label{figure: comparison of changing super slot length}
\end{figure}

\subsection{Performance versus Task Arrival Time}
To evaluate how the task arrival timing affects the system performance, a scenario with five vehicles is simulated: the first vehicle receives its task at different time slots in different trials, while the other four vehicles receive their tasks at the first time slot.

Fig. \ref{figure: cost comparison under different arrival time slot} demonstrates the average total cost and its components (average energy consumption, average transmission time, and average residual information bits) versus the task arrival time of the 1-st vehicle. In the first 30 time slots, the impact on average total cost, average energy consumption, and average residual information bits remains trivial. 
However, the average transmission time decreases when the task of 1-st vehicle comes late. This is because the scheduler tends to complete the transmissions of the vehicles 2-5 first, and then allocate the remaining resources to the 1-st vehicle, resulting in shorter transmission time on average. In contrast, when the task arrives in the last 10 time slots, the average total cost increases significantly. This is because the 1-st vehicle cannot complete the task offloading in the scheduling period with a large probability, leading to high penalty.

More insights can be illustrated from Fig. \ref{fig: per slot power dynamic arrival}, which compares the scheduler behavior when the task of the 1-st vehicle arrives at the 30-th and 40-th time slots, respectively. The solid curves show the average per-slot transmission energy, while the shaded regions represent the 5-th to 95-th percentiles of instantaneous per-slot transmission energy of all simulation trials. In both cases, the proposed algorithm demonstrates lower power variation than the baselines.

\subsection{Performance versus Vehicle Number \& Super Slot Length }
In this part, the impacts of vehicle number and super slot length on the performance of the proposed algorithm are examined.
In Fig.~\ref{fig: average cost with increased vehicle number}, the average total costs with different vehicle numbers ($5, 10, 15, 20$) are shown. The performance gain of the proposed algorithm over the best-performing baseline increases with network size: from $114.3$ to $1420.2$ units of cost. This gain demonstrates that our proposed algorithm could better exploit the vehicles' mobility diversity. 

In \figurename~\ref{figure: comparison of changing super slot length}, the performance of the proposed algorithm versus different super slot lengths ($\Gamma=50, 25, ..., 1$) is demonstrated. It can be observed that shorter super slot leads to better performance, at the price of larger computation complexity. This is because the reference scheduling can better adapt to the random trajectories with shorter super slots. On the other hand, when the super slot length is below $10$, the performance gain becomes negligible. Hence, it is not necessary to update the reference scheduling frequently, as the online scheduling can adjust the actual transmission parameters according to the system state. This observation justifies our design of two-time-scale scheduling.
\subsection{Performance versus Weight}

\begin{figure*}[t]
    \centering
    \begin{minipage}{\textwidth}
    \centering
    \includegraphics[width=0.9\textwidth]{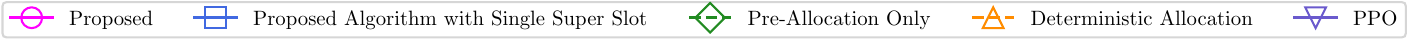}
    \end{minipage}
    
    \vspace{0.15cm}
    
    \begin{subfigure}{0.24\linewidth}
        \centering
        \includegraphics[width=\linewidth]{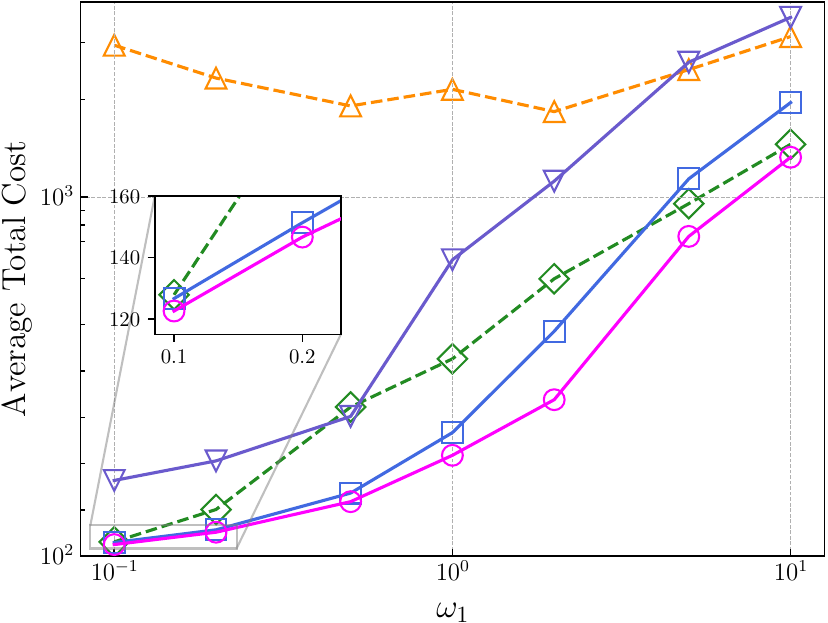}
        \caption{}
        \label{figure: cost comparison under different omega - total cost}
    \end{subfigure}
    \begin{subfigure}{0.24\linewidth}
        \centering
        \includegraphics[width=\linewidth]{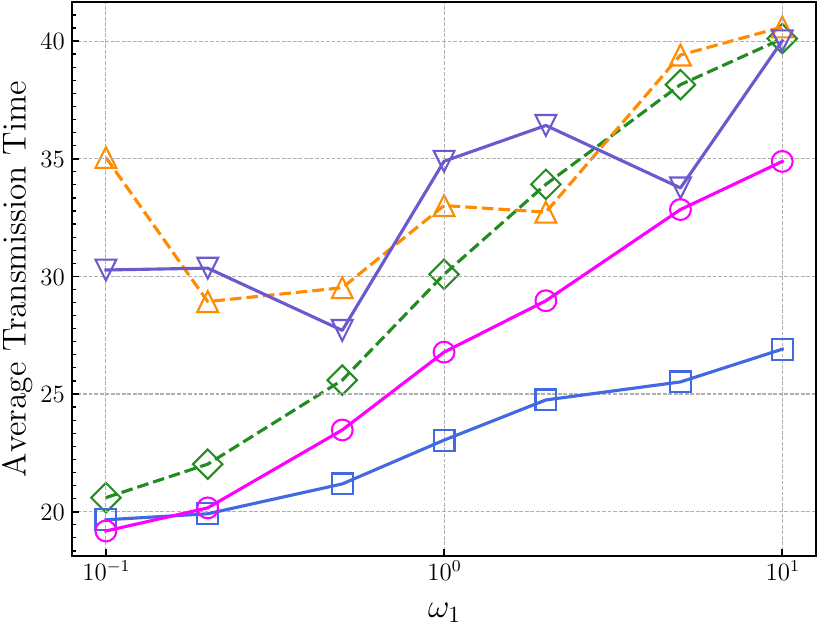}
        \caption{}
        \label{figure: cost comparison under different omega - time cost}
    \end{subfigure}
    \begin{subfigure}{0.24\linewidth}
        \centering
        \includegraphics[width=\linewidth]{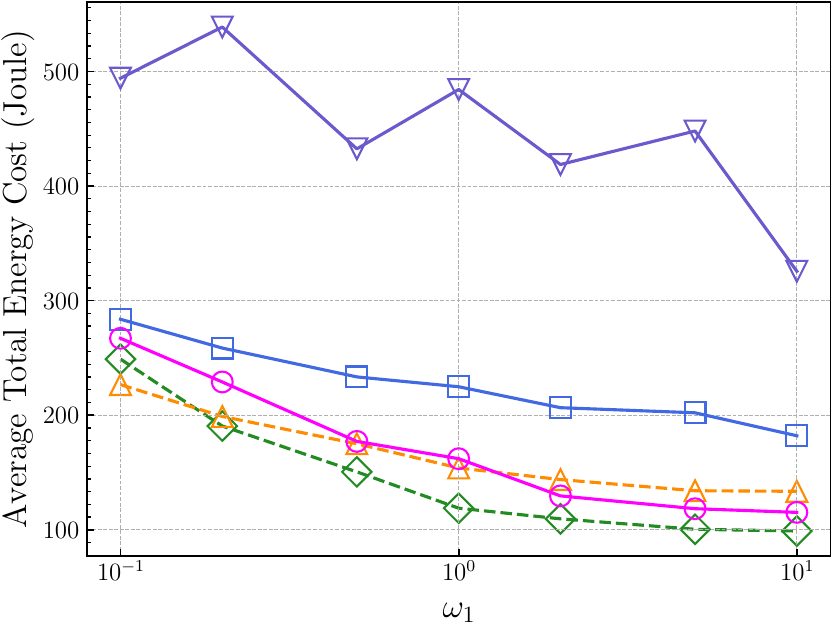}
        \caption{}
        \label{figure: cost comparison under different omega - energy cost}
    \end{subfigure}
    \begin{subfigure}{0.24\linewidth}
        \centering
        \includegraphics[width=\linewidth]{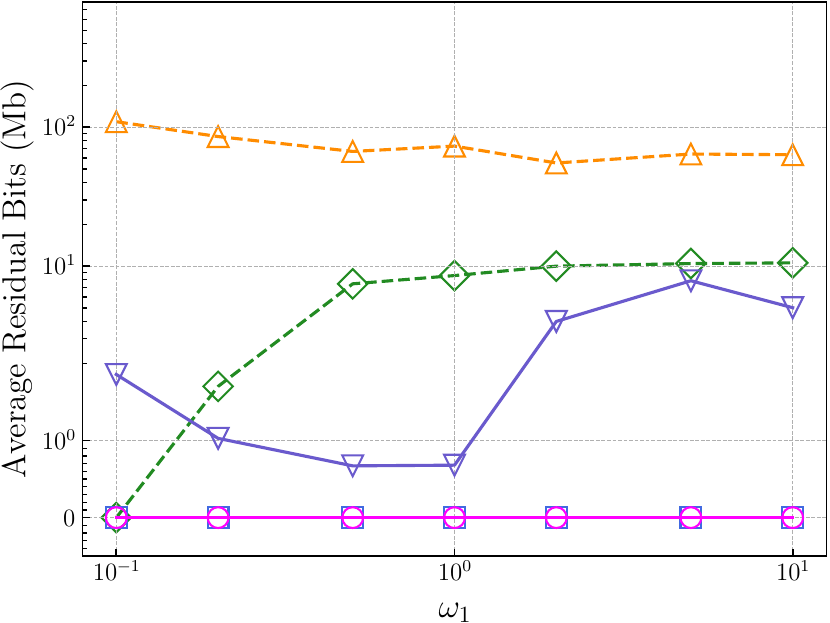}
        \caption{}
        \label{figure: cost comparison under different omega - data cost}
    \end{subfigure}
    \caption{The comparison of (a) average total cost, (b) average transmission time, (c) average energy cost and (d) average residual information bits per vehicle versus different weight values.}
    \label{figure: cost comparison under different omega}
\end{figure*}

The performance of the proposed algorithm and baselines with different weight values ($\omega_1 = 0.1, 0.2, 0.5, 1, 2, 5, 10$) is compared in \figurename~\ref{figure: cost comparison under different omega}, where $\omega_2 = 5$. When $\omega_1$ increases, the system puts larger weight on the energy consumption.

Hence, it can be observed that in the performance of the proposed algorithm and model-based baselines, larger $\omega_1$ leads to less total energy consumption, at the price of more transmission time slots. However, in Fig. \ref{figure: cost comparison under different omega - energy cost}, we can see that the energy consumption of the PPO scheme does not consistently decrease with increasing $\omega_1$. This is because PPO is a model-free reinforcement learning algorithm, whose performance depends on the randomness of training data. 

Moreover, it can be observed that the proposed algorithm outperforms all the baselines in most values of $\omega_1$, except that the performance of the proposed algorithm, PAO and PAS3 schemes are close with small value of $\omega_1$ ($\omega_1=0.1$). In this case, the scheduler is less sensitive to the energy consumption, and the benefit of two-time-scale scheduling is less significant. 

\subsection{Optimality Gap Analysis}
To assess the optimality gap of the proposed algorithm, the performance of the proposed algorithm and baselines is compared with that of the optimal policy.
\begin{figure}[!t]
    \centering
    \begin{subfigure}{0.49\linewidth}
        \centering
        \includegraphics[width=\linewidth]{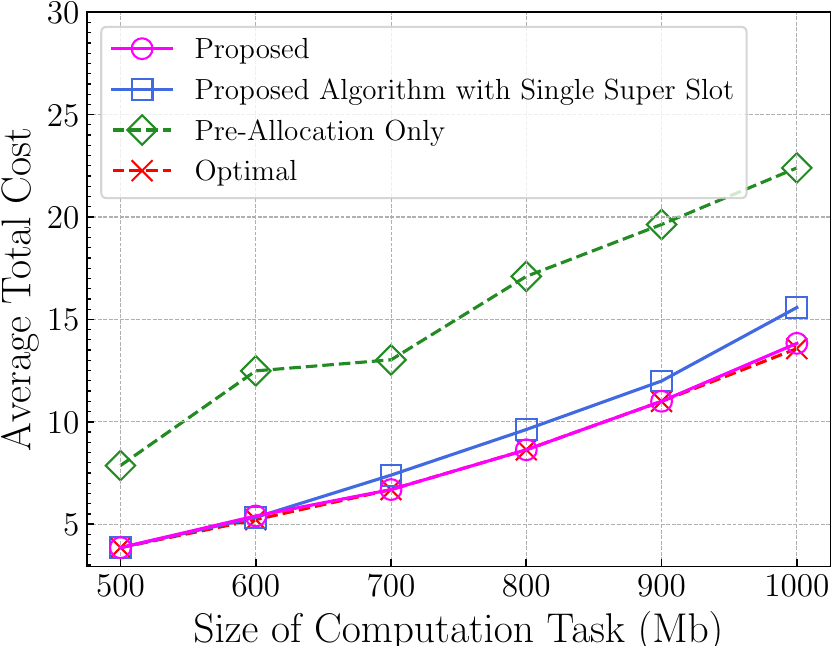}
        \caption{1 vehicle}
    \end{subfigure}
    \begin{subfigure}{0.49\linewidth}
        \centering
        \includegraphics[width=\linewidth]{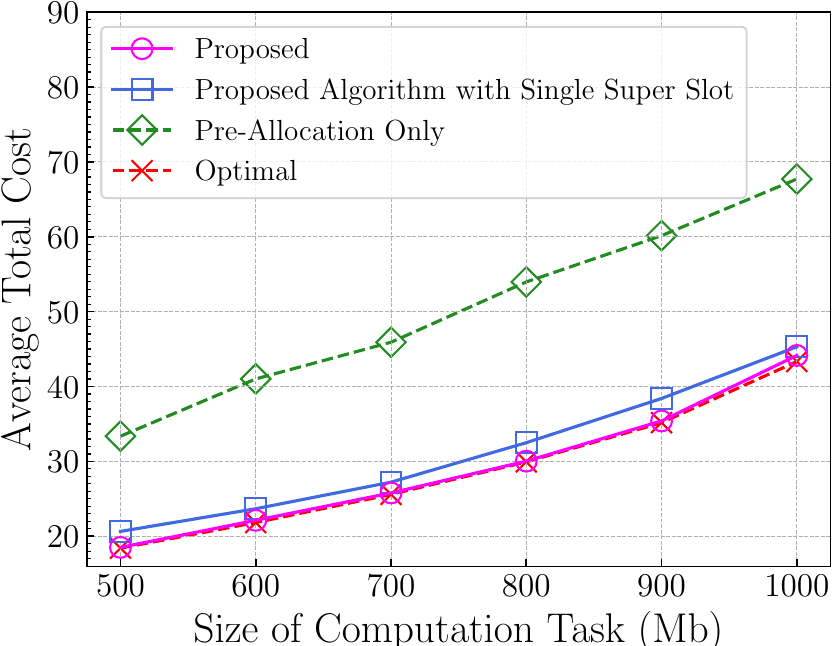}
        \caption{2 vehicles}
    \end{subfigure}
    \caption{Illustration of optimality gap with 1 and 2 vehicles.}
    \label{fig: optimality gap}
\end{figure}
In the optimal policy, we quantize the number of buffered information bits into $21$ levels. Similar quantization is also applied to the proposed algorithm and the baselines, in order to guarantee the fairness of comparison.
We evaluate the performance of scenarios with $1$ and $2$ vehicles under $T = 15$ and $\Gamma = 5$, where the task size ranges from $500$ Mb to $1000$ Mb. The corresponding state spaces reach $60,816$ and $427,308,714$ when the task size is $1000$ Mb, respectively. This highlights the curse of dimensionality, which justifies the necessity of the proposed low-complexity algorithm.
As shown in \figurename~\ref{fig: optimality gap}, the proposed algorithm maintains a small optimality gap. In the worst case, the average total cost of the proposed algorithm is $1.85\%$ higher than that of the optimal policy. 

Because of the prohibitive computational complexity, the optimal performance could not be simulated with more vehicles. Instead, we introduce an alternative lower-bound policy as follows:
\begin{itemize}
  \item \textbf{Perfect Prediction (PP)}: The scheduler optimizes the offloading decisions with perfect prediction of vehicles' trajectories.
\end{itemize}

\begin{figure}[!t]
    \centering
    \vspace{0.15em}
    \begin{subfigure}{0.51\linewidth}
        \centering
        \includegraphics[width=\linewidth]{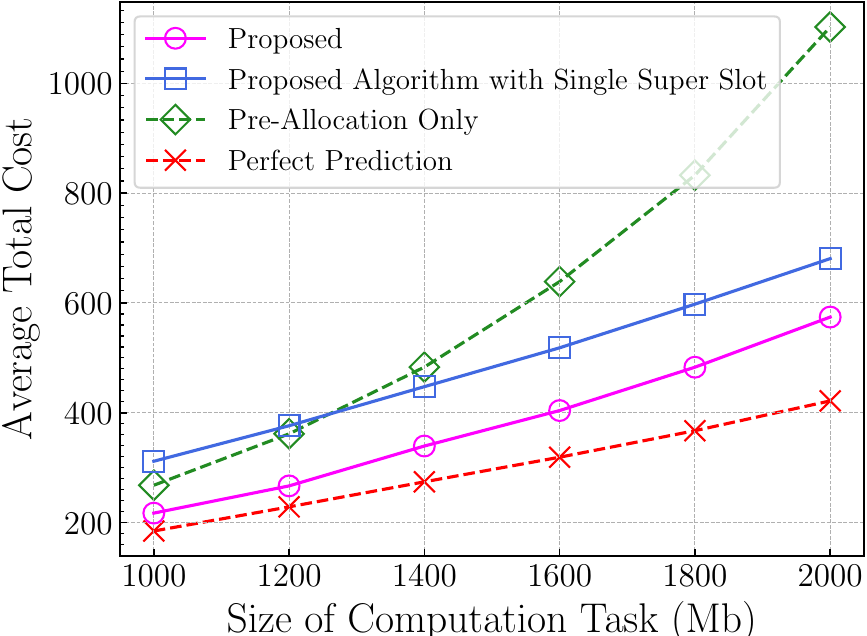}
    \end{subfigure}
    \vspace{0.42em}
    \caption{Illustration of optimality gap, where the average total cost versus different sizes of computation task is plotted.}
    \label{fig: optimality}
\end{figure}

Due to the perfect knowledge of the vehicles' trajectories, the PP scheme results in lower average total cost compared with the optimal policy. It can be observed from Fig.~\ref{fig: optimality} that the proposed algorithm consistently incurs higher average costs compared to the PP scheme. Prior knowledge on the future trajectories could suppress $15.02\%$-$26.59\%$ average cost compared with the proposed algorithm. The average total cost of the optimal scheduling policy should be larger than that of the PP scheme and smaller than that of the proposed algorithm.

\section{Conclusion}
\label{sec: Conclusion}
In this paper, the scheduling of task offloading in vehicular networks is investigated. The statistics of vehicles' mobility are learned from high-fidelity traffic simulator, and the uplink transmission of one task per vehicle is modeled as a finite-horizon MDP. To address the complexity issue of optimal solution, we propose a two-time-scale solution framework, where the optimal value functions are approximated in each super slot, and successive action improvement is conducted in each slot of online scheduling. An asymptotic upper bound of the average total cost is also derived. 
Simulation results  demonstrate that the proposed solution significantly outperforms the baseline algorithms.

\appendices

\section{Proof of The Lemma  \ref{lemma: Asymptotically Achievable Average Cost}}
\label{appendix: achievable cost}

In fact, the functions of $f^1_{n,t+1}$ and  $f^2_{n,t+1}$ represent the cost of transmission time plus residual information bits and the cost of energy consumption, respectively. Without loss of generality, take $k$-th feasible region as an example, where $$\sum_{\kappa= 1}^{k} r_{n,t+k} < d_{n,t+1} \le \sum_{\kappa= 1}^{k+1} r_{n,t+k}.$$ The average total cost since the $t$-th time slot can then be expressed as
\begin{align*}
    &\mathds{E}_{\state_{t+1},...,\state_{t+k+1} } \left[ \sum_{\kappa= 1}^{k+1} \mathsf{c}_{n, t+\kappa} (\state_{t+\kappa}, \action_{t+\kappa}) \middle| \state_t, \right] \\
    =& \mathds{E}_{ \state_{t+1},...,\state_{t+k+1}  } \left[ \sum_{\kappa= 1}^{k+1} \mathds{1}( d_{n,t+\kappa} > 0 ) + \omega_1 P_{n,t+\kappa} \txTime_{n,t+\kappa}  \middle| \state_t \right] \\
    =& k + \mathds{E}_{ \boldsymbol{l}_{n,t+1},...,\boldsymbol{l}_{n,t+k+1}  } \left[ \sum_{\kappa= 1}^{k+1} \omega_1 P_{n,t+\kappa} \txTime_{n,t+\kappa}  \middle| \boldsymbol{l}_{n,t} \right] \\
    =& k + \omega_1 \sum_{\kappa= 1}^{k+1} \mathds{E}_{\boldsymbol{l}_{n,t+\kappa}} \left[ P_{n,t+\kappa} \txTime_{n,t+\kappa} \middle| \boldsymbol{l}_{n,t} \right], 
\end{align*}
where the two terms refer to the costs of $f^1_{n,t+1}$ and  $f^2_{n,t+1}$ respectively.

To further compute $\mathds{E}_{\boldsymbol{l}_{n,t+\kappa}} \left[ P_{n,t+\kappa} \txTime_{n,t+\kappa} \middle| \boldsymbol{l}_{n,t} \right]$, we express the power $P_{n,t+\kappa}$, $\kappa \leq k$, as a function of the throughput $r_{n,t+k}$ as follows, where the high SNR approximation is applied.
\begin{equation*}
    P_{n,t+\kappa} \approx 2^{- \mathds{E}\left[\log_2 \frac{G\vert h \vert^2}{\sigma^2}\right]} \Vert \boldsymbol{l}_{n,t+\kappa} - \boldsymbol{l}_m \Vert_2^{\gamma} 2^{\frac{r_{n,t+\kappa}}{\txTime_{n, t}^{(\spIdx)} \slotLen \bandwidth}}.
\end{equation*}
For $\kappa = k + 1$, the vehicle transmits $d_{n, t} - \sum_{\kappa= 1}^{k} r_{n,t+k}$ information bits in $(t+\kappa)$-th time slot, with the transmit power 
\begin{equation*}
    P_{n,t+k+1} \approx 2^{- \mathds{E}\left[\log_2 \frac{G\vert h \vert^2}{\sigma^2}\right]} \Vert \boldsymbol{l}_{n,t+k+1} - \boldsymbol{l}_m \Vert_2^{\gamma} 2^{\frac{d_{n, t} - \sum_{\kappa= 1}^{k} r_{n,t+k}}{\txTime_{n, t}^{(\spIdx)} \slotLen \bandwidth}}.
\end{equation*}
Hence, the expression of $f^2_{n,t+1}$ can be obtained.

In order to facilitate the above transmission, the peak power constraint should satisfy 
$$P_{\max} \ge \overline{P}(\varUpsilon_{n,t+1}),$$
where
\begin{equation}
    \overline{P}(\varUpsilon_{n,t+1}) \triangleq \max_{ \substack{ r_n \in \varUpsilon_{n,t+1} \\  \boldsymbol{l_n} \in \mathcal{W}_n}} 2^{- \mathds{E}\left[\log_2 \frac{G\vert h \vert^2}{\sigma^2}\right]} \Vert \boldsymbol{l}_n - \boldsymbol{l}_m \Vert_2^{\gamma} 2^{\frac{r_n}{\txTime_{n, t}^{(\spIdx)} \slotLen \bandwidth}}.
\end{equation}
In the case of $\sum_{\kappa= 1}^{T - t} { r_{n,t+\kappa} } < d_{n,t+1}$, additional cost of residual information bits is included in $f^1_{n,t+1}$. This finishes the proof.
\qed

\section{Solution of \pref{Reference Policy - Throughput Optimization - Case 1} and \pref{Reference Policy - Throughput Optimization - Case 2}}
\label{Appendix: offline pre-allocation}

Without loss of generality, we provide the solution of \pref{Reference Policy - Throughput Optimization - Case 1} and \pref{Reference Policy - Throughput Optimization - Case 2} in the $i$-th iteration. 

\subsubsection{Solution of P7, $ \nthVehicleFinishTime \le T$ and $d_{n,T} = 0$} Let $$\gamma_{n,t} = \txTime_{n, t}^{i}\slotLen \bandwidth \log_2 \frac{P_{\max}}{\Phi_{n,t}}$$ denote the upperbound of scheduled uplink throughput $r_{n,t}$. 
The \textit{Karush-Kuhn-Tucker} (KKT) conditions \cite[Ch. 5]{boydConvexOptimization2004} for problem \pref{Reference Policy - Throughput Optimization - Case 1} are given by
\begin{equation}
	\begin{aligned}
		\omega_1' 2^{\frac{r_{n,t}}{\txTime_{n,t}^{i} \slotLen \bandwidth }} \Phi_{n,t} {\ln 2} - \lambda_{n,t} + \nu_{n,t} + \mu_n & = 0,    \\
		\lambda_{n,t} r_{n,t} &= 0, \\
        \nu_{n,t} \left( r_{n,t} -  \gamma_{n,t}\right) &= 0,  \\
		\sum_{t \in \mathcal{T}_n, t \le \nthVehicleFinishTime} r_{n,t} - D_{n,a} & = 0,   \\
		\lambda_{n,t}\ge 0, \nu_{n,t} &\ge 0.  \\
	\end{aligned}
\end{equation}
where $\omega_1' \triangleq \frac{\omega_1}{\slotLen \bandwidth}$, $\lambda_{n,t}$, $\nu_{n,t}$, $\mu_n$ are the Lagrange multipliers for the constraints of $r_{n,t} > 0 $, $r_{n,t} \le \gamma_{n,t} $ and $ \sum_{t \in \mathcal{T}_n, t \le \nthVehicleFinishTime} r_{n,t} = D_{n,a}$, respectively.
Therefore, we have
\begin{equation}
	\label{eqaution: reference rate compute 1}
	\mu_n = - \omega_1' \Phi_{n,t} \ln 2 \cdot 2^{\frac{r_{n,t}}{\txTime_{n,t} \slotLen \bandwidth}}, \forall r_{n,t} \in (0, \gamma_{n,t}).
\end{equation}

It can be observed that when
\begin{equation}
	\label{eqaution: reference rate compute 2}
	\mu_n \ge  - \omega_1' \Phi_{n,t} \ln 2,
\end{equation}
$\lambda_{n,t} \ge 0$ and the $n$-th vehicle will not transmit in the $t$-th time slot. When
\begin{equation}
	\mu_n \le  - \omega_1' 2^{\frac{\gamma_{n,t}}{\txTime_{n,t}^{i} \slotLen \bandwidth}} \Phi_{n,t} \ln 2,
\end{equation}
$ \nu_{n,t} \ge 0 $ and the $n$-th vehicle will transmit at the peak power in the $t$-th time slot.

Hence, the time slots can be classified into three sets as follows:
\begin{equation}
	\label{definition: time slot set 1}
	\mathcal{T}_{n}^1(\mu_n) \triangleq \left\{ t \middle|  \mu_n \ge  - \omega_1' \Phi_{n,t} \ln 2, \forall t \right\}, 
\end{equation}
\begin{equation}
	\label{definition: time slot set 2}
	\mathcal{T}_{n}^2(\mu_n) \triangleq \left\{ t \middle|  \mu_n \in \left(
		\begin{matrix}
			- \omega_1' 2^{\frac{\gamma_{n,t}}{\txTime_{n,t}^{i} \slotLen \bandwidth}} \Phi_{n,t} \ln 2, \\
			- \omega_1' \Phi_{n,t} \ln 2
		\end{matrix}\right), \forall t\right\},
\end{equation}
\begin{equation}
	\label{definition: time slot set 3}
	\mathcal{T}_{n}^3(\mu_n) \triangleq \left\{ t \middle|  \mu_n \le - \omega_1' 2^{\frac{\gamma_{n,t}}{\txTime_{n,t}^{i} \slotLen \bandwidth}} \Phi_{n,t} \ln 2, \forall t \right\} ,
\end{equation}

Therefore, given variable $\mu_n$, the optimal objective can be expressed as
\begin{equation}
	\label{equation: optimal objective for reference policy}
	v(\mu_n) \triangleq \sum_{t \le \nthVehicleFinishTime} v_t(\mu_n),
\end{equation}
and the vehicle-wise objective function is given by
\begin{equation}
	v_t(\mu_n) = \begin{cases}
		0, & t \in \mathcal{T}_{n}^1(\mu_n), \\
		- \frac{\mu_n \slotLen \bandwidth }{ \ln 2 }\txTime_{n,t}^{i}, & t \in \mathcal{T}_{n}^2(\mu_n), \\
		 2^{\frac{\gamma_{n,t}}{\txTime_{n,t}^{i} \slotLen \bandwidth }} \Phi_{n,t}\txTime_{n,t}^{i} , & t \in \mathcal{T}_{n}^3(\mu_n),
	\end{cases}
\end{equation}

According to the last constraint of P7,  $\mu_n$ can be obtained according to
\begin{equation}
	\label{eq: constraint on data transmission}
	D_{n,a} = \sum_{t \in \mathcal{T}_n^2 (\mu_n)} \txTime_{n,t}^{i} \slotLen \bandwidth \log_2 \frac{\mu_n}{ - \omega_1' \Phi_{n,t} \ln 2 } 
	+\sum_{t \in \mathcal{T}_n^3 (\mu_n)} \gamma_{n,t}.
\end{equation}
Since the right-hand side is monotonically decreasing with respect to $\mu_n$, $\mu_n$ can be found by the bisection search method. If multiple $\mu_n$ candidates exist, the maximum one is chosen due to the monotonically increasing equation \eqref{equation: optimal objective for reference policy}.

\subsubsection{Solution of P8, $\nthVehicleFinishTime = T$, and $0 < d_{n,T}$}
In this case, the optimization problem can be transformed into
\begin{equation}
	\begin{aligned}
		\min_{  r_{n,T_{n,a}}, \ldots, r_{n,T} } \quad & \omega_1 \sum_{t \in \mathcal{T}_n} 2^{\frac{r_{n,t}}{\txTime_{n,t}^i \slotLen \bandwidth}} \Phi_{n,t} \txTime_{n, t}^i  \\ 
		 & +  \omega_2 \left(D_{n,a} - \sum_{t \in \mathcal{T}_n} r_{n,t} \right) \\
		\text{s.t.}  \quad
		 & -r_{n,t} \le 0, \forall t,                                                                                                                                          \\
		 & \sum_{t \in \mathcal{T}_n} r_{n,t} < D_{n,a},  \\
		 & r_{n,t} \le \gamma_{n,t}, \forall t. \\
	\end{aligned}
\end{equation}
where the constant $T = \nthVehicleFinishTime$ in the original objective is omitted.
The KKT conditions for the above problem are given by
\begin{equation}
	\begin{aligned}
		\omega_1' 2^{\frac{r_{n,t}}{\txTime_{n,t}\slotLen \bandwidth}} \Phi_{n,t} \ln 2 - \lambda_{n,t} + \nu_{n,t} + \left(\mu_{n} - \omega_2\right) & = 0,  \\
		\lambda_{n,t} r_{n,t} = 0, \nu_{n,t} \left( r_{n,t} - \gamma_{n,t} \right) &= 0, \\
		\mu_n \left(\sum_{t \in \mathcal{T}_n} r_{n,t}  - D_{n,a}\right) & = 0, \\
		\lambda_{n,t} \ge 0, \nu \ge 0, \mu_n & = 0. \\
	\end{aligned}
\end{equation}
Therefore, the solution can be given by
\begin{equation}
	r_{n,t} = \begin{cases}
		0,  & \omega_1' \Phi_{n,t} \ln 2 \ge \omega_2, \\
		\gamma_{n,t}, & \omega_1' 2^{\frac{\gamma_{n,t}}{\txTime_{n,t}\slotLen \bandwidth}} \Phi_{n,t} \ln 2 \le \omega_2, \\
		\txTime_{n,t} \slotLen \bandwidth \log_2 \frac{\omega_2}{\omega_1' \Phi_{n,t} \ln 2}, & \text{otherwise}.
	\end{cases}
\end{equation}
\qed

\bibliographystyle{IEEEtran}
\bibliography{IEEEabrv,reference,reply1}
\end{document}